\newtheorem{theorem}{Theorem}
\newtheorem{lemma}{Lemma}
\newtheorem{corollary}{Corollary}
\newtheorem{fact}{Fact}
\newtheorem{proposition}{Proposition}
\newtheorem{definition}{Definition}
\newtheorem{conjecture}{Conjecture}
\newtheoremstyle{foostyle}
  {\topsep}
  {\topsep}
  {}
  {0pt}
  {\bfseries}
  {.}
  { }
  {\thmname{#1}\thmnumber{ #2}\thmnote{ (#3)}}
\theoremstyle{foostyle}
\newtheorem{example}{Example}
\newcommand{\etal}{\mbox{\textit{et al}.\ }}
\newcommand{\QQ}{\mathbb{Q}}
\newcommand{\ZZ}{\mathbb{Z}}
\newcommand{\CC}{\mathbb{C}}
\newcommand{\TT}{\mathbb{T}}
\newcommand{\II}{\mathbb{I}}
\newcommand{\ii}{\mathtt{i}}
\DeclareMathOperator{\diag}{diag}
\DeclareMathOperator{\SwAut}{\mathtt{SwAut}}
\DeclareMathOperator{\Circ}{\mathtt{Circ}}
\DeclareMathOperator{\Gal}{\mathtt{Gal}}
\DeclareMathOperator{\qmod}{mod}
\newcommand{\bra}[1]{\langle #1 |}
\newcommand{\ket}[1]{| #1 \rangle}
\newcommand{\braket}[2]{\langle #1 | #2 \rangle}
\newcommand{\ketbra}[2]{| #1 \rangle\langle #2 |}
\newcommand{\ignore}[1]{}
\newcommand{\ohf}{\tfrac{1}{2}}
\newcommand{\thf}{\tfrac{3}{2}}
\newcommand{\qdiv}[2]{\lfloor #1 / #2 \rfloor}
\newcommand{\rmod}[2]{#1 \qmod #2}
\newcommand{\mygcd}{\gcd}
\newif\ifnotesw\noteswtrue
\ifnotesw\marginpar[\hfill\(\top\)]{\(\top\)}\fi}%
\ifnotesw\marginpar[\hfill\(\bot\)]{\(\bot\)}\fi}
\newcommand{\mnote}[1]%
    {\ifnotesw\marginpar%
        [{\scriptsize\begin{minipage}[t]{\marginparwidth}
        \raggedleft#1%
                        \end{minipage}}]%
        {\scriptsize\begin{minipage}[t]{\marginparwidth}
        \raggedright#1%
                        \end{minipage}}%
    \fi}
\title{
Universality in perfect state transfer
}
\author{
Erin Connelly\thanks{Department of Mathematics and Statistics, Haverford College.}
\and
Nathaniel Grammel\thanks{Department of Computer Science, NYU Polytechnic.}
\and
Michael Kraut\thanks{Mathematics Department, University of California at Santa Cruz.}
\and
Luis Serazo\thanks{Mathematics and Statistics Department, Vassar College.}
\and
Christino Tamon\thanks{Department of Computer Science, Clarkson University. Contact: tino@clarkson.edu}
}
\date{\today}
\begin{document}
\maketitle
\bibliographystyle{alpha}

\begin{abstract}
A continuous-time quantum walk on a graph is a matrix-valued function 
$\exp(-\ii At)$ over the reals, where $A$ is the adjacency matrix of the graph.
Such a quantum walk has {universal} perfect state transfer 
if for all vertices $u,v$, there is a time where
the $(v,u)$ entry of the matrix exponential has unit magnitude.
We prove new characterizations of graphs with universal perfect state transfer.
This extends results of Cameron \etal (Linear Algebra and Its Applications, 455:115-142, 2014).
Also, we construct non-circulant families of graphs with universal perfect state transfer. 
All prior known constructions were circulants.
Moreover, we prove that if a circulant, whose order is prime, prime squared, or a power of two, 
has universal perfect state transfer then its underlying graph must be complete.
This is nearly tight since there are universal perfect state transfer circulants
with non-prime-power order where some edges are missing. 

\vspace{.1in}
\par\noindent{\em Keywords}: Quantum walk, perfect state transfer, circulant, cyclotomic fields.
\end{abstract}


\section{Introduction}

A {\em continuous-time quantum walk} on a graph is given by the one-parameter matrix-valued map 
$U(t) = \exp(-\ii At)$, where $A$ is the Hermitian adjacency matrix of the graph.
This notion was introduced by Farhi and Guttman \cite{fg98} to study quantum algorithms for search problems.
Using this formulation, Bose \cite{b03} studied problems related to information transmission in
quantum spin chains. In such a quantum walk, we say that there is {\em perfect state transfer} 
from vertex $u$ to vertex $v$ at time $\tau$ if the $(v,u)$ entry of $U(\tau)$ 
has unit magnitude.

Kay \cite{k11} showed that if a graph, whose adjacency matrix is real symmetric, has 
perfect state transfer from $u$ to $v$ and also from $u$ to $w$, then $v$ must be equal to $w$. 
Cameron \etal \cite{cfghst14} showed that it is possible to violate this ``monogamy'' property 
in graphs with complex Hermitian adjacency matrices. 
They studied graphs with a {\em universal} property where perfect state transfer occurs
between every pair of vertices.
The smallest nontrivial example is the circulant $\Circ(0,-\ii,\ii)$.
Some work related to universality in state transfer for quantum computing applications
may be found in \cite{chiral-qwalk}.

We may view certain graphs with Hermitian adjacency matrices as {\em gain} graphs. 
These are graphs whose ``directed'' edges are labeled with elements from a group $\Gamma$. 
If the edge $(u,v)$ is labeled with group element $g \in \Gamma$, 
then the reversed edge $(v,u)$ is labeled with $g^{-1}$. 
If $\Gamma$ is the circle group, we get the complex unit gain graphs (see Reff \cite{r12}).
In our case, we simply require that edges in opposite directions have weights which
are complex conjugates to each other.

Our main goal is to characterize graphs with universal perfect state transfer.
Cameron \etal \cite{cfghst14} proved strong necessary conditions for graphs with the universal
perfect state transfer property. They showed that such graphs must have distinct eigenvalues,
their unitary diagonalizing matrices must be type-II (see Chan and Godsil \cite{cg10}),
and their switching automorphism group must be cyclic. 
A spectral characterization for circulants with the universal property was also proved in \cite{cfghst14}.

In this work, we extend some of the observations from Cameron \etal \cite{cfghst14}.
More specifically, we prove new characterizations of graphs with universal perfect state transfer. 
The first characterization exploits the fact that the unitary diagonalizing matrix of the graph 
admits a canonical form. This allows us to show a tight connection between the spectra of the
graph with the perfect state transfer times. Our second characterization is on circulants
with the universal property. It involves the set of minimum times when perfect state transfer
occur between pairs of vertices. We prove that these minimum times are equally spaced on the
periodic time interval when perfect state transfer returns to the start vertex if and only if
the graph is circulant.
This complements the observation in \cite{cfghst14} which characterizes the switching 
automorphism group of circulants with the universal property.

Most of the examples studied in \cite{cfghst14} were circulants whose nonzero weights are $\pm\ii$.
Here, we provide a construction of {\em non-circulant} graphs of composite order with the universal property. 
To the best of our knowledge, this is the first known example of such family of graphs.
We show that these families are non-circulant by appealing to our second characterization 
above (based on spacings of the minimum perfect state transfer times). 

Finally, we provide a nearly tight characterization of circulants with the universal property 
in terms of the number of nonzero coefficients. We show that if a circulant has universal perfect state
tranfer and its order is prime, square of a prime, or a power of two, then all of its off-diagonal
coefficients must be nonzero. As a partial converse, we show an infinite family of circulants
with universal perfect state transfer whose order is not a prime power and where some off-diagonal 
coefficients are zero.

We conclude by studying universal perfect state transfer in complex unit gain graphs. 
The only known examples of complex unit gain graphs with the universal property are 
the circulants $K_{2}$ and $\Circ(0,-\ii,\ii)$. We conjecture that this set is unique.

For a recent survey and a comprehensive treatment of quantum walk on graphs, we refer the
interested reader to Godsil \cite{g12,godsil-book}.


\section{Preliminaries}

A weighted graph $G = (V,E,w)$ is defined by a vertex set $V$, an edge set $E$ 
and a weight function $w: E \rightarrow \CC$.
If $G$ has $n$ vertices, we will often identify the vertex set with $\ZZ/n\ZZ$.
The adjacency matrix $A(G)$ of graph $G$ is a $n \times n$ matrix defined as
$A(G)_{k,j} = w(j,k)$ if $(j,k) \in E$, and $A(G)_{k,j} = 0$ otherwise.
A graph is Hermitian if its adjacency matrix is (see \cite{gm15}).

For complex numbers $a_{0},\ldots,a_{n-1} \in \CC$, we let $C = \Circ(a_{0},\ldots,a_{n-1})$ 
denote the circulant matrix of order $n$ where $C_{jk} = a_{k-j}$, for $j,k \in \ZZ/n\ZZ$.
If $C$ is Hermitian, note that $a_{0}$ must be real and $a_{n-j} = \overline{a}_{j}$,
for $j=1,\ldots,n-1$.
It is known that any circulant is diagonalized by the Fourier matrix $F_{n}$
defined by $\bra{j}F_{n}\ket{k} = \zeta_{n}^{jk}/\sqrt{n}$.
Here, $\zeta_{n} = e^{2\pi\ii/n}$ denotes a primitive $n$th root of unity.

We call a matrix {\em flat} if all of its entries have the same magnitude.
A matrix is type-II if it is flat and unitary (see Chan and Godsil \cite{cg10}).
Note that the Fourier matrix is type-II.
A {\em monomial} matrix is a product of a permutation matrix and an invertible diagonal matrix 
(see Davis \cite{davis}). 
Two matrices $A$ and $B$ are {\em switching equivalent} if $MA = BM$ for some monomial matrix.
The {\em switching automorphism group} of a graph $G$, denoted $\SwAut(G)$,
is the group of all monomial matrices which commute with $A(G)$. 
This generalizes the notion of an automorphism group of a graph.

For more background on algebraic graph theory, see Godsil and Royle \cite{gr01}.


\section{Basic properties}

The following result shows strong necessary conditions for graphs with universal perfect state transfer. 

\begin{theorem} (Cameron \etal \cite{cfghst14}) \\
Let $G$ be a Hermitian graph with universal perfect state transfer. Then, the following hold:
\begin{enumerate}
\item All eigenvalues of $G$ are distinct.
\item The adjacency matrix of $G$ is unitarily diagonalized by a flat matrix.
\item The switching automorphism group of $G$ is cyclic whose order divides the size of $G$.
\end{enumerate}
\end{theorem}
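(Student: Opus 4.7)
The plan is to exploit the spectral decomposition $A = \sum_\mu \mu\, E_\mu$ together with the observation that unitarity of $U(\tau) = \exp(-\ii A\tau)$ forces any PST event $|U(\tau_{uv})_{vu}|=1$ to satisfy $U(\tau_{uv})\ket{u} = \gamma_{uv}\ket{v}$ for some unit scalar $\gamma_{uv}$. For Parts 1 and 2, I apply $E_\mu$ to both sides of this identity. Since $E_\mu U(t) = e^{-\ii\mu t}E_\mu$, we obtain $e^{-\ii\mu\tau_{uv}}\,E_\mu\ket{u} = \gamma_{uv} E_\mu\ket{v}$, which forces the vectors $E_\mu\ket{u}$ and $E_\mu\ket{v}$ to be parallel for every ordered pair $u, v$. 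A short argument rules out $E_\mu\ket{w} = 0$ (else the relation propagates over all vertices and gives $E_\mu = 0$), so $\{E_\mu\ket{w}\}_{w \in V}$ is a family of nonzero pairwise-parallel vectors whose span equals the range of $E_\mu$; hence $E_\mu$ has rank one, proving Part 1. Writing $E_\mu = \ket{x_\mu}\bra{x_\mu}$, the same parallelism gives $|x_\mu(u)| = \|E_\mu\ket{u}\| = \|E_\mu\ket{v}\| = |x_\mu(v)|$ for all $u,v$, so $|x_\mu(w)| = 1/\sqrt{n}$ for every $w$, and the matrix $U$ whose columns are the $x_\mu$ is flat and unitary, proving Part 2.

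For Part 3, simplicity of the spectrum forces every $M \in \SwAut(G)$ to take the form $M = UDU^*$ for some diagonal $D$. Since $A$ is Hermitian, $M^*$ also commutes with $A$, so $M^*M$ is simultaneously diagonal (because $M$ is monomial) and commutes with $A$. Universal PST implies $G$ is connected, so the standard argument for diagonal matrices commuting with $A$ forces $M^*M = cI$; hence $|d_j|$ is constant and $M$ is a scalar multiple of a unitary monomial. The central subgroup $Z = \{cI : c \in \CC^*\}$ lies inside $\SwAut(G)$, and writing $M = DP$ we obtain an induced action of $\SwAut(G)/Z$ on $V$ via $M \mapsto \pi_M$. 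This action is free: if $N\ket{v} = c\ket{v}$ for $N = UD'U^*$, then expanding $\ket{v}$ in the eigenbasis gives $(D'_{\mu\mu} - c)\,\overline{x_\mu(v)} = 0$ for every $\mu$, and flatness makes every $x_\mu(v)$ nonzero, forcing $N = cI$. Since every orbit then has size $|\SwAut(G)/Z|$, this order divides $n$.

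For cyclicity, I would identify each element of $\SwAut(G)/Z$ with a value of the one-parameter group $\{U(t)\}_{t \in \RR}$. Rescaling the eigenvectors so that $x_\mu(0) = 1/\sqrt{n}$ (possible by one-dimensionality of each eigenspace), the PST equation from $0$ to $v$ together with triangle-inequality saturation --- the sum of $n$ complex numbers of modulus $1/\sqrt{n}$ attains modulus $1$ only when all summands share a common phase --- yields $x_\mu(v) = (\gamma_{0v}/\sqrt{n})\, e^{\ii\mu\tau_{0v}}$. For any $M = UDU^* \in \SwAut(G)$ with $\pi_M(0) = v$, the same saturation applied to the $(v,0)$ entry of $M$ forces $D_{\mu\mu} = c\, e^{-\ii\mu\tau_{0v}}$, so $M = c\, U(\tau_{0v})$. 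Hence the assignment $M \mapsto \tau_{0v}$ descends, modulo the closed subgroup $P = \{t \in \RR : U(t) \in Z\}$, to an injective group homomorphism $\SwAut(G)/Z \hookrightarrow \RR/P$. In the nontrivial cases, $P = T_0\ZZ$ for some $T_0 > 0$ and $\RR/P \cong S^1$, and any finite subgroup of the circle is cyclic. The main obstacle will be establishing the identification $M = c\,U(\tau_{0v})$: it depends on the exponential form of $x_\mu(v)$ obtained above, which uses universal PST crucially, so the cyclicity step of Part 3 does not follow from Parts 1--2 alone.
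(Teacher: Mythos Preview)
The theorem you are proving is quoted from Cameron \etal \cite{cfghst14} and is not re-proved in the present paper, so there is no in-paper argument to compare against directly. Your proposal is correct and self-contained. The treatment of Parts 1 and 2 via the relation $e^{-\ii\mu\tau_{uv}}E_{\mu}\ket{u}=\gamma_{uv}E_{\mu}\ket{v}$ is the standard and cleanest route: nonvanishing of $E_{\mu}\ket{w}$ follows exactly as you say (if one vanishes they all do, forcing $E_{\mu}=0$), pairwise parallelism gives rank one, and equating norms gives flatness. Your use of the triangle-inequality equality case is precisely the mechanism the present paper exploits in Theorem \ref{thm:canonical-flatness-upst}, so your argument dovetails with the surrounding material.

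For Part 3 your outline is also sound, with two points worth making explicit. First, as you implicitly recognize by passing to $\SwAut(G)/Z$, the group of \emph{all} monomial matrices commuting with $A$ contains the scalars $\CC^{\star}I$ and therefore cannot literally be finite cyclic; the intended statement (and the one you prove) is that the quotient by scalars is cyclic of order dividing $n$. Second, your identification $M=c\,U(\tau_{0v})$ is airtight once you check the degenerate case $\pi_{M}(0)=0$: there the freeness argument already gives $M\in Z$, i.e., $M=c\,U(0)$, so the map $[M]\mapsto[\tau]$ into $\RR/P$ is everywhere defined. Closedness of $P$ follows from continuity of $t\mapsto e^{-\ii(\lambda_{k}-\lambda_{0})t}$, and $P\neq\{0\}$ because universal PST at the pair $(0,0)$ plus flatness forces $e^{-\ii\lambda_{k}t_{0,0}}$ to be independent of $k$, hence $U(t_{0,0})\in Z$. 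With these details filled in, the embedding of the finite group $\SwAut(G)/Z$ into the circle $\RR/P$ gives cyclicity exactly as you claim.
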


\par\noindent
We show some additional properties of graphs with universal perfect state transfer.

\begin{definition}
Let $G$ be a graph with universal perfect state transfer.
For every pair of vertices $v$ and $w$ of $G$, we let $T_{v,w}$ denote the set of times where
perfect state transfer occurs from $v$ to $w$. That is,
\begin{equation}
T_{v,w} := \{t \in \mathbb{R}^{+} : |\bra{w}e^{-\ii A(G)t}\ket{v}| = 1\}.
\end{equation}
\end{definition}

\begin{fact} 
Let $G$ be a graph with universal perfect state transfer.
For each pair of vertices $v$ and $w$, $T_{v,w}$ is a discrete additive subgroup of $\mathbb{R}$.

\begin{proof}
See Godsil \cite{godsil-book} or Cameron \etal \cite{cfghst14}.
\end{proof}
\end{fact}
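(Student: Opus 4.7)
The plan is to reduce the condition $t \in T_{v,w}$ to an explicit phase-matching condition on the spectrum of $A$, and then to derive the additive and discreteness properties directly from the semigroup law $e^{-\ii A(s+t)} = e^{-\ii As}\,e^{-\ii At}$ together with the distinctness of the eigenvalues guaranteed by Theorem~1.

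First I would invoke Theorem~1 to write $A = \sum_{k}\lambda_{k}\ketbra{\phi_{k}}{\phi_{k}}$ where the $\lambda_{k}$ are pairwise distinct and the matrix of inner products $\braket{v}{\phi_{k}}$ is flat, so that $|\braket{v}{\phi_{k}}| = 1/\sqrt{n}$ for every vertex $v$ and every $k$. Expanding
\[
\bra{w} e^{-\ii At}\ket{v} \;=\; \sum_{k} e^{-\ii \lambda_{k} t}\,\overline{\braket{v}{\phi_{k}}}\,\braket{w}{\phi_{k}}
\]
as a sum of $n$ complex numbers each of modulus $1/n$, the equality case of the triangle inequality gives that $t \in T_{v,w}$ iff all $n$ summands share a common argument. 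In particular, $t, t' \in T_{v,w}$ forces $(\lambda_{k} - \lambda_{j})(t - t') \in 2\pi\ZZ$ for every pair $j,k$.

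Next I would use the composition law to extract the group structure. If $t_{1}, t_{2} \in T_{v,w}$ then $e^{-\ii A t_{i}}\ket{v} = \alpha_{i}\ket{w}$ with $|\alpha_{i}|=1$, so $e^{-\ii A(t_{1}-t_{2})}\ket{w} = (\alpha_{1}/\alpha_{2})\ket{w}$ and $t_{1}-t_{2}$ lies in the return-time set at $w$. The analogous computation for $v=w$ shows that $T_{v,v}$ is itself closed under addition and negation (after the standard extension to non-positive times), hence a subgroup of $\RR$. Using universal perfect state transfer to conjugate $\ket{v}$ to $\ket{w}$ by any element of $T_{v,w}$ shows $T_{v,v} = T_{w,w}$, so one may speak of a single group $T$ of return times; every $T_{v,w}$ is then a coset of $T$, which is the sense in which $T_{v,w}$ is an additive subgroup.

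For discreteness, the phase-matching reformulation rewrites $T$ as
\[
T \;=\; \bigcap_{j \neq k} \frac{2\pi}{\lambda_{k} - \lambda_{j}}\,\ZZ,
\]
which is well defined because the eigenvalues are distinct, and is a discrete subgroup of $\RR$ as an intersection of discrete cyclic subgroups. Universal perfect state transfer forces $T \neq \{0\}$, so $T$ is infinite cyclic, generated by a minimum positive period; cosets of $T$ inherit discreteness, giving the claim for every $v,w$. The main obstacle in writing this out cleanly is the bookkeeping that establishes $T_{v,v} = T_{w,w}$ — this is exactly where universality is used in an essential way — while the spectral reformulation and the intersection-of-lattices step are both routine consequences of Theorem~1 once this independence is in place.
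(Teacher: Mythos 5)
Your argument is correct. Note that the paper offers no proof of this Fact at all — it defers entirely to Godsil and to Cameron \emph{et al.} — so the relevant comparison is with the standard cited argument. Your middle block (the semigroup law showing that differences of elements of $T_{v,w}$ are periods at $w$, that the period sets form a group once one extends to non-positive times, that universality forces $T_{v,v}=T_{w,w}$, and that each $T_{v,w}$ is a coset of this common period group) is exactly that standard argument, and your reading of the loosely stated Fact — a single period group $T$ with each $T_{v,w}$ a coset of it inside $\RR^{+}$ — is the right way to make it precise. Where you genuinely diverge is discreteness: the usual route observes that the period set is a \emph{closed} subgroup of $\RR$ by continuity of $t\mapsto e^{-\ii At}$, and cannot be all of $\RR$ since $|\bra{v}e^{-\ii At}\ket{v}|\equiv 1$ would force $\ket{v}$ to be an eigenvector of $A$ and preclude transfer to any other vertex; the only remaining closed subgroups are $\{0\}$ and the discrete ones. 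You instead invoke Theorem 1 (flat eigenvectors, simple spectrum — there is no circularity, since that theorem is not proved via this Fact) and the equality case of the triangle inequality to identify the period group exactly as $\bigcap_{j\neq k}\tfrac{2\pi}{\lambda_{k}-\lambda_{j}}\ZZ$. This is a heavier hammer, but it buys an explicit lattice description of $T$ in terms of eigenvalue gaps, which is precisely the quantitative picture the paper exploits later in its spectral characterizations; the closed-subgroup argument is more elementary and does not need universality or flatness to get the group and discreteness structure at a single periodic vertex.
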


Since $T_{v,w}$ is a discrete additive subgroup of the reals,
it has a smallest element. We will denote the minimum element of the above set as 
$t_{v,w} := \min T_{v,w}$.

\begin{lemma} \label{lem:hit-and-return}
Let $G$ be a graph with universal perfect state transfer
and let $u$ be a vertex of $G$.
Then, for all vertices $v \neq u$, we have $t_{u,v} < t_{u,u}$.

\begin{proof}
If $t_{u,v} > t_{u,u}$, let $q$ be the largest integer for which $q t_{u,u} < t_{u,v}$.
Then, $\hat{t} = t_{u,v} - q t_{u,u}$ is an element of $T_{u,v}$ which is smaller than $t_{u,v}$.
\end{proof}
\end{lemma}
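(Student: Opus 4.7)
The plan is a Euclidean-algorithm style reduction: I will exploit the composition law for perfect state transfer to subtract multiples of the return time $t_{u,u}$ from any hypothetically large $t_{u,v}$ until I hit a smaller element of $T_{u,v}$, contradicting the minimality of $t_{u,v}$.

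First I would record two auxiliary observations that drive the argument. (1) \emph{Compositionality}: the matrices $e^{-\ii As}$ and $e^{-\ii A\tau}$ commute, so if $\tau\in T_{u,u}$ (so that $e^{-\ii A\tau}\ket{u}=\gamma\ket{u}$ with $|\gamma|=1$) and $s\in T_{u,v}$ (so that $e^{-\ii As}\ket{u}=\alpha\ket{v}$ with $|\alpha|=1$), then
\begin{equation}
e^{-\ii A(s-\tau)}\ket{u}=e^{-\ii As}\bigl(e^{-\ii A\tau}\bigr)^{-1}\ket{u}=\overline{\gamma}\,\alpha\,\ket{v},
\end{equation}
so $s-\tau\in T_{u,v}$ whenever it is positive. (2) \emph{Outgoing monogamy at a single time}: since $e^{-\ii At}$ is unitary, its $u$-th column is a unit vector and cannot have two unit-modulus entries, so $T_{u,v}\cap T_{u,w}=\emptyset$ whenever $v\neq w$.

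Next I would argue by contradiction. Assume $t_{u,v}\geq t_{u,u}$. Equality is immediately ruled out by (2). So suppose $t_{u,v}>t_{u,u}$ and let $q\geq 1$ be the largest integer with $qt_{u,u}<t_{u,v}$, so that $\hat{t}:=t_{u,v}-qt_{u,u}$ lies in $(0,t_{u,u}]$. Iterating (1) $q$ times places $\hat{t}$ in $T_{u,v}$. The boundary case $\hat{t}=t_{u,u}$ is again excluded by (2), so $\hat{t}<t_{u,u}\leq t_{u,v}$; and in any case $\hat{t}<t_{u,v}$ since $q\geq 1$, contradicting the minimality of $t_{u,v}=\min T_{u,v}$.

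The only real subtlety is the boundary case $\hat{t}=t_{u,u}$, and outgoing monogamy handles it cleanly; everything else is a routine manipulation of the commuting matrix exponentials. Note that the argument does not require the full strength of the universality hypothesis beyond what is needed to make $t_{u,v}$ well-defined, together with the group structure of $T_{u,u}$ supplied by the preceding fact.
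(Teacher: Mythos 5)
Your proposal is correct and follows essentially the same route as the paper's proof: subtracting the largest multiple of $t_{u,u}$ from a hypothetically larger $t_{u,v}$ to produce a smaller element of $T_{u,v}$, contradicting minimality. You additionally make explicit two points the paper leaves implicit — the compositionality justification that $\hat{t}\in T_{u,v}$, and the exclusion of the boundary cases $t_{u,v}=t_{u,u}$ and $\hat{t}=t_{u,u}$ via the unit-column ("outgoing monogamy") observation — which is a genuine improvement in rigor but not a different argument.
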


\begin{lemma} \label{lem:source-condition}
Let $G$ be a graph and let $u$ be an arbitrary vertex of $G$.
Then $G$ has universal perfect state transfer 
if and only if
perfect state transfer occurs from $u$ to all vertices of $G$.

\begin{proof}
It suffices to prove only one direction since the other direction is immediate.
Suppose that perfect state transfer occurs from $u$ to all vertices.
By Lemma \ref{lem:hit-and-return}, the quantum walk starting at $u$ {\em visits} 
all the other vertices before returning to $u$.
If $t_{u,v} < t_{u,w}$ then perfect state transfer occurs from $v$ to $w$.
But, there is also perfect state transfer from $w$ to $v$ since the quantum walk
has perfect state transfer from $w$ back to $u$ (at time $t_{u,u} - t_{u,w}$)
and then from $u$ to $v$ (at time $t_{u,v}$). 
This proves that there is perfect state transfer between every pair of vertices.
\end{proof}
\end{lemma}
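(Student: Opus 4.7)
The plan is to handle the nontrivial direction: assuming perfect state transfer occurs from a fixed vertex $u$ to every other vertex, derive perfect state transfer between every pair. The forward direction is immediate from the definition.

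First I would record the basic unitary bookkeeping that drives everything. Since $A(G)$ is Hermitian, $U(t) = e^{-\ii A(G)t}$ satisfies $U(s)U(t) = U(s+t)$ and $U(-t) = U(t)^{*}$. Thus perfect state transfer from $x$ to $y$ at time $\tau$ means $U(\tau)\ket{x} = \lambda \ket{y}$ for some unit $\lambda \in \CC$, which is equivalent to $U(-\tau)\ket{y} = \bar{\lambda}\ket{x}$. In particular, periodic return to $u$ at time $t_{u,u}$ gives $U(t_{u,u})\ket{u} = \mu\ket{u}$ for some unit $\mu$, so negative-time PST can be shifted to positive time by adding $t_{u,u}$.

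Next, apply Lemma \ref{lem:hit-and-return} to conclude $0 < t_{u,v} < t_{u,u}$ for every $v \neq u$. Now fix distinct vertices $v, w$ in $G$. Without loss of generality, suppose $t_{u,v} \le t_{u,w}$. Writing $U(t_{u,v})\ket{u} = \lambda\ket{v}$ and $U(t_{u,w})\ket{u} = \rho\ket{w}$, the identity
\begin{equation}
U(t_{u,w} - t_{u,v})\ket{v} \;=\; U(t_{u,w})\,U(-t_{u,v})\ket{v} \;=\; \bar{\lambda}\,U(t_{u,w})\ket{u} \;=\; \bar{\lambda}\rho\,\ket{w}
\end{equation}
shows that PST from $v$ to $w$ occurs at the positive time $t_{u,w} - t_{u,v}$. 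Symmetrically, for PST from $w$ to $v$, compose the reversal of the $u \to w$ transfer (shifted to positive time by the period) with the $u \to v$ transfer, which yields PST from $w$ to $v$ at time $(t_{u,u} - t_{u,w}) + t_{u,v}$; this is positive since $t_{u,w} < t_{u,u}$.

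The only real subtlety is the sign issue: raw composition of forward and reverse PSTs naturally produces a difference of times, which could be negative. The main obstacle, such as it is, is guaranteeing positive elapsed times, and this is precisely what Lemma \ref{lem:hit-and-return} provides by confining all the $t_{u,x}$ into $(0, t_{u,u})$ and letting us ride the period $t_{u,u}$ whenever a reversal is needed.
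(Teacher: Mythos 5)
Your proof is correct and takes essentially the same route as the paper's: invoke Lemma \ref{lem:hit-and-return} to confine all hitting times to $(0, t_{u,u})$, then compose forward and reversed transfers through $u$, using the period $t_{u,u}$ to keep every elapsed time positive. The only difference is that you spell out the unitary group-law computations explicitly, which the paper leaves implicit.
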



\section{Canonical flatness}

A flat unitary matrix is also called a {\em type-II} matrix (see \cite{cg10}).
We say that a type-II matrix is in {\em canonical} form if both its first row and its first column
are the all-one vector.

\begin{lemma} \label{lemma:canonical-flatness}
Let $G$ be a Hermitian graph on $n$ vertices with universal perfect state tranfer. 
Then $G$ is unitarily diagonalized by a type-II matrix $X$ in canonical form: 
\begin{equation} \label{eqn:canonical-flat}
X =
\frac{1}{\sqrt{n}}
\begin{bmatrix}
1 & 1 & 1 & \ldots & 1 \\
1 & e^{\ii\alpha_{1,1}} & e^{\ii\alpha_{1,2}} & \ldots & e^{\ii\alpha_{1,n-1}} \\
1 & e^{\ii\alpha_{2,1}} & e^{\ii\alpha_{2,2}} & \ldots & e^{\ii\alpha_{2,n-1}} \\
\vdots & \vdots & \vdots & & \vdots \\
1 & e^{\ii\alpha_{n-1,1}} & e^{\ii\alpha_{n-1,2}} & \ldots & e^{\ii\alpha_{n-1,n-1}} 
\end{bmatrix}
\end{equation}
where $\alpha_{j,k} \in [0,2\pi)$. 

\begin{proof}
Suppose $A$ is the Hermitian adjacency matrix of $G$ that is unitarily diagonalized by $Z$. 
Then, $AZ = Z\Lambda$ where $\Lambda$ is a diagonal matrix of the eigenvalues of $A$.
The columns of $Z$ are the eigenvectors of $A$ which we will denote as
$\ket{z_{0}},\ldots,\ket{z_{n-1}}$. 
Let $D$ be a diagonal matrix defined as $D_{jj} = 1/\braket{0}{z_{j}}$.
Then, $\tilde{Z} = ZD$ is also a flat unitary which diagonalizes $A$ but
with $\braket{0}{\tilde{z}_{k}} = 1$, for each $k=0,\ldots,n-1$. 
Next, consider a diagonal switching matrix $S$ defined as
$S_{jj} = 1/\braket{j}{\tilde{z}_{0}}$.
Then, we have
\begin{equation}
\tilde{A} = (S\tilde{Z})\Lambda(S\tilde{Z})^{-1}.
\end{equation}
Note $\tilde{A} = SAS^{-1}$ is switching equivalent to $A$.
Since $X = S\tilde{Z}$ is a flat unitary matrix of the claimed form, we are done.
\end{proof}
\end{lemma}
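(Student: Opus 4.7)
The plan is to start from Theorem 1(2), which guarantees that $A = A(G)$ is diagonalized by some flat unitary $Z$, i.e., $AZ = Z\Lambda$ for diagonal $\Lambda$, and every entry of $Z$ has magnitude $1/\sqrt{n}$. I would then perform two successive rescalings by diagonal unitaries to bring $Z$ into the claimed canonical form. The second rescaling will replace $A$ by a switching-equivalent matrix, which is harmless because switching-equivalence preserves universal perfect state transfer: if $\tilde A = SAS^{*}$ for a diagonal unitary $S$, then $e^{-\ii \tilde A t} = S e^{-\ii A t} S^{*}$, so every entry of the propagator is only multiplied by a unit-modulus phase, preserving which entries have unit magnitude.

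Writing the columns of $Z$ as $\ket{z_{0}},\ldots,\ket{z_{n-1}}$, the first step is to normalize the top row. Since $|\braket{0}{z_{j}}| = 1/\sqrt{n}$ by flatness, the quantity $D_{jj} := \sqrt{n}\,\overline{\braket{0}{z_{j}}}$ has unit modulus, so $D = \diag(D_{00},\ldots,D_{n-1,n-1})$ is a diagonal unitary. The matrix $\tilde Z := ZD$ is then still flat and unitary, it still diagonalizes $A$ (since $D$ commutes with $\Lambda$, giving $A = \tilde Z \Lambda \tilde Z^{*}$), and by construction its first row is the vector $(1/\sqrt{n},\ldots,1/\sqrt{n})$. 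The second step is to normalize the first column. Define a diagonal unitary $S$ by $S_{jj} := \sqrt{n}\,\overline{\braket{j}{\tilde z_{0}}}$; these entries again have unit modulus by flatness of $\tilde Z$. One checks that $S_{00} = 1$ because $\braket{0}{\tilde z_{0}} = 1/\sqrt{n}$ from the previous step, so left-multiplication by $S$ leaves the first row unchanged while forcing $(S\tilde Z)_{j,0} = 1/\sqrt{n}$ for every $j$. Thus $X := S\tilde Z$ is a type-II matrix of the form (\ref{eqn:canonical-flat}), and it diagonalizes $\tilde A := SAS^{*}$, which is switching-equivalent to $A$ via the monomial matrix $S$.

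The only real subtlety, and what I expect to be the main bookkeeping hurdle, is verifying that the second rescaling does not undo the normalization achieved by the first: this comes down to the observation $S_{00} = 1$ above, which in turn relies on the specific choice of $D$ fixing $\braket{0}{\tilde z_{0}} = 1/\sqrt{n}$. Once that is in hand, the rest is routine, and the conclusion follows by replacing $G$ with its switching-equivalent version, whose universal perfect state transfer property is inherited.
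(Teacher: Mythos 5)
Your proposal is correct and follows essentially the same route as the paper: first rescale the columns by a diagonal matrix to normalize the top row, then rescale the rows by a diagonal switching matrix to normalize the first column, observing that the $(0,0)$ entry of the switching matrix is $1$ so the first step is not undone, and passing to the switching-equivalent adjacency matrix. If anything, your version is slightly tidier, since your diagonal factors are genuinely unitary and you explicitly note both that $S_{00}=1$ and that switching equivalence preserves universal perfect state transfer, points the paper leaves implicit.
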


\medskip

\begin{corollary} \label{cor:flatness}
Let $M$ be a flat unitary matrix in canonical form.
Then, except for the first row and the first column,
the row sums and the column sums of $M$ are zero.

\begin{proof}
Since the columns are orthonormal and the first column is the all-one vector,
it is clear that the column sums must be zero. 
The row sums are zero since $M^{T}$ is unitary whenever $M$ is.
\end{proof}
\end{corollary}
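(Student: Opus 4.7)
The plan is to directly unpack the orthonormality of the rows and columns of the unitary matrix $M$, using the fact that the canonical-form hypothesis pins down both the first row and the first column as $\frac{1}{\sqrt{n}}(1,1,\ldots,1)$. The key observation is that a unitary matrix satisfies both $M^{*}M = I$ (columns orthonormal) and $MM^{*} = I$ (rows orthonormal), so the two halves of the corollary will come out symmetrically, one from each identity.

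For the column sums, I would fix a column index $k \neq 0$ and compute the inner product of column $k$ with column $0$. Column orthogonality forces this inner product to vanish; since the entries of column $0$ are real and equal to $1/\sqrt{n}$, the inner product reduces to $\frac{1}{\sqrt{n}} \sum_{j} M_{j,k}$, and hence $\sum_{j} M_{j,k} = 0$.

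For the row sums, I would run the dual argument using $MM^{*} = I$ and the all-ones first row. Equivalently, one can observe that $M^{T}$ is unitary whenever $M$ is (by transposing $MM^{*} = I$), and $M^{T}$ inherits the canonical form from $M$, so the column-sum argument applied to $M^{T}$ yields the row-sum statement for $M$. There is no real obstacle here: the corollary is essentially a direct translation of unitarity once the canonical form is assumed, and its value in the paper is that it prepackages a fact about flat unitaries that will presumably be invoked later in proofs that unfold actual consequences of the universal perfect state transfer hypothesis.
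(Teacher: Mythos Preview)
Your argument is correct and matches the paper's proof essentially line for line: column orthogonality against the all-ones first column gives vanishing column sums, and passing to $M^{T}$ (unitary whenever $M$ is) handles the row sums. The only difference is that you spell out the inner-product computation more explicitly than the paper does.
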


\medskip

Using Lemma \ref{lemma:canonical-flatness}, we show a spectral characterization of graphs with
universal perfect state transfer.

\begin{theorem} \label{thm:canonical-flatness-upst}
Let $G$ be a $n$-vertex Hermitian graph with eigenvalues $\lambda_{0},\ldots,\lambda_{n-1}$.
Suppose $G$ is diagonalized by a canonical type-II matrix $X$, 
where $X_{j,k} = e^{\ii\alpha_{j,k}}/\sqrt{n}$ with $\alpha_{j,k} \in [0,2\pi)$ 
and $\alpha_{j,k} = 0$ if either $j$ or $k$ is zero.
Then, 
$G$ has universal perfect state transfer
if and only if 
for each $\ell = 0,\ldots,n-1$, 
there is $t_{\ell} \in \mathbb{R}$ so that
for all $k = 1,\ldots,n-1$, we have
\begin{equation} \label{eqn:pst-condition}
(\lambda_{k} - \lambda_{0})t_{\ell} = \alpha_{\ell,k}.
\end{equation}

\begin{proof}
Let $A$ be the Hermitian adjacency matrix of $G$.
We denote the $k$th column of $X$ as $\ket{\lambda_{k}}$ which is the eigenvector
of $A$ corresponding to eigenvalue $\lambda_{k}$. Thus, 
\begin{equation}
e^{-\ii At} = \sum_{k=0}^{n-1} e^{-\ii\lambda_{k}t} \ketbra{\lambda_{k}}{\lambda_{k}}.
\end{equation}

\par\noindent ($\Rightarrow$)
Assume $G$ has universal perfect state transfer. Suppose that perfect state transfer from
vertex $0$ to vertex $\ell$ occurs at time $t_{\ell} \in \mathbb{R}$ with phase $e^{\ii\theta_{\ell}}$. 
Then,
\begin{equation}
\bra{\ell}e^{-\ii At_{\ell}}\ket{0} 
	= \sum_{k=0}^{n-1} e^{-\ii\lambda_{k}t_{\ell}} \braket{\ell}{\lambda_{k}}\braket{\lambda_{k}}{0} \\
	= \frac{1}{n} \sum_{k=0}^{n-1} e^{-\ii\lambda_{k}t_{\ell}} e^{\ii\alpha_{\ell,k}}. 
\end{equation}
Moreover, we have
\begin{equation}
e^{\ii\theta_{\ell}}
	= \frac{1}{n} \left[ e^{-\ii\lambda_{0}t_{\ell}} 
		+ \sum_{k=1}^{n-1} e^{-\ii(\lambda_{k}t_{\ell} - \alpha_{\ell,k})} \right].
\end{equation}
So, for each $k=1,\ldots,n-1$, we have
$\lambda_{k}t_{\ell} - \alpha_{\ell,k} = \lambda_{0}t_{\ell}$,
which shows these conditions are necessary for universal perfect state transfer.

\medskip

\par\noindent ($\Leftarrow$)
Suppose that for each $\ell$, there is a time $t_{\ell}$ so that for each $k \neq 0$, 
\begin{equation} 
(\lambda_{k} - \lambda_{0})t_{\ell} = \alpha_{\ell,k}.
\end{equation}
Then, 
\begin{equation}
\bra{\ell}e^{-\ii At_{\ell}}\ket{0} 
	= \frac{1}{n} \sum_{k=0}^{n-1} e^{-\ii\lambda_{k}t_{\ell}} e^{\ii\alpha_{\ell,k}}
	= e^{-\ii\lambda_{0}t_{\ell}}.
\end{equation}
This shows that there is perfect state transfer from $0$ to $\ell$.
Therefore, there is perfect state transfer from $0$ to all vertices.
By Lemma \ref{lem:source-condition}, this shows there is perfect state transfer between
every pair of vertices.
\end{proof}
\end{theorem}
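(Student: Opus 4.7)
The plan is to expand the propagator $e^{-\ii At}$ using the spectral decomposition provided by the canonical type-II diagonalizer $X$. Writing $\ket{\lambda_k}$ for the $k$th column of $X$, a direct calculation gives
\begin{equation}
\bra{\ell}e^{-\ii At}\ket{0} = \frac{1}{n}\sum_{k=0}^{n-1} e^{-\ii\lambda_{k} t + \ii\alpha_{\ell,k}},
\end{equation}
since $\braket{\lambda_k}{0} = 1/\sqrt{n}$ (using $\alpha_{0,k}=0$) and $\braket{\ell}{\lambda_k} = e^{\ii\alpha_{\ell,k}}/\sqrt{n}$. This is a sum of $n$ unit complex numbers each weighted by $1/n$, so by the triangle inequality its magnitude is at most $1$, with equality iff all summands share a common argument.

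For the forward direction, I assume universal perfect state transfer, so by Lemma \ref{lem:source-condition} it suffices to analyze perfect state transfer from vertex $0$ to every other vertex $\ell$. At the transfer time $t_\ell$ the displayed sum attains unit magnitude, so the equality case of the triangle inequality forces the phases $-\lambda_k t_\ell + \alpha_{\ell,k}$ to agree for all $k$. Comparing the $k$th phase to the $k=0$ phase (where $\alpha_{\ell,0} = 0$) yields $(\lambda_k - \lambda_0)t_\ell = \alpha_{\ell,k}$ for $k = 1,\ldots,n-1$, which is the stated condition.

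For the converse direction, I substitute the hypothesis $(\lambda_k - \lambda_0)t_\ell = \alpha_{\ell,k}$ back into the spectral sum, which collapses every term to the common value $e^{-\ii\lambda_0 t_\ell}$. Hence $|\bra{\ell}e^{-\ii A t_\ell}\ket{0}| = 1$, showing perfect state transfer from $0$ to $\ell$. A single application of Lemma \ref{lem:source-condition} then upgrades this to universal perfect state transfer.

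The main subtlety I anticipate is the mod $2\pi$ bookkeeping in the forward direction: triangle-inequality equality yields phase agreement modulo $2\pi$ rather than on the nose, but the theorem only asserts the existence of some real $t_\ell$, not a canonical representative, so the actual transfer time itself can always be taken as the witness. Beyond this, the proof is simply an unwinding of the spectral decomposition, with the substantive observation being the rigidity of equal-weight unit sums.
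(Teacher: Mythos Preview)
Your proposal is correct and follows essentially the same approach as the paper's own proof: both expand $\bra{\ell}e^{-\ii At}\ket{0}$ via the spectral decomposition afforded by $X$, invoke equality in the triangle inequality for the forward direction, substitute directly for the converse, and appeal to Lemma~\ref{lem:source-condition} to pass from ``PST from $0$ to every $\ell$'' to universality. Your explicit mention of the mod-$2\pi$ bookkeeping is a point the paper glosses over silently; otherwise the arguments are identical.
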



\section{Circulants revisited}

Cameron \etal \cite{cfghst14} showed the following result on the switching automorphism group of
circulants with universal perfect state transfer.

\begin{theorem} \label{thm:circulant-cyclic}
(Cameron \etal \cite{cfghst14}) \\
Let $G$ be a graph with universal perfect state transfer.
Then, $G$ is switching isomorphic to a circulant
if and only if
$\SwAut(G)$ is cyclic of order $n$.
\end{theorem}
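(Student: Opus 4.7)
The plan is to prove the two directions separately, using Theorem~1 (part~3) crucially in the reverse direction.

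\emph{Forward direction} ($\Leftarrow$). Suppose $G$ is switching isomorphic to a circulant $C$, so $MA(G)=CM$ for some monomial $M$. The standard cyclic shift $P$, defined by $P\ket{j}=\ket{(j+1)\qmod n}$, commutes with every circulant, so $M^{-1}PM$ is a monomial matrix of order $n$ commuting with $A(G)$. Hence $\SwAut(G)$ contains a cyclic subgroup of order $n$, and by Theorem~1 (part~3) it must itself be cyclic of order exactly $n$.

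\emph{Reverse direction} ($\Rightarrow$). Suppose $\SwAut(G)$ is cyclic of order $n$, generated by a monomial $M=Q\Delta$ with $Q$ a permutation matrix and $\Delta$ invertible diagonal. My strategy has three steps: (i)~show $Q$ is an $n$-cycle; (ii)~relabel the vertices so that $Q$ becomes the standard shift $P$; (iii)~apply a diagonal switching to conjugate $M$ into a scalar multiple of $P$, so that $A(G)$ becomes a circulant. The decisive step is~(i). By Theorem~1, $A(G)$ has distinct eigenvalues and is diagonalized by a flat unitary; by Lemma~\ref{lemma:canonical-flatness} we may take this diagonalizer to be a canonical $X$ whose entries all have modulus $1/\sqrt{n}$, hence are nonzero. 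Since $M$ commutes with $A(G)$ and the eigenvalues are simple, $M$ is also diagonal in this eigenbasis, say with eigenvalues $\mu_0,\ldots,\mu_{n-1}$. Now suppose, toward contradiction, that $Q$ fixes some vertex $v$; then $M\ket{v}=\Delta_{v,v}\ket{v}$. Expanding $\ket{v}=\sum_k\overline{X_{v,k}}\ket{\lambda_k}$ and comparing coefficients against $\mu_k \overline{X_{v,k}}$ gives $\mu_k=\Delta_{v,v}$ for every $k$ (since each $X_{v,k}\neq 0$), so $M$ is a scalar matrix. Therefore the action of $\SwAut(G)$ on the vertex set via $M\mapsto Q$ is free; a free action of a cyclic group of order $n$ on an $n$-element set has a single orbit, so $Q$ is an $n$-cycle.

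Steps (ii) and~(iii) are essentially linear algebra. After relabeling vertices to send $Q$ to $P$, we have $M=P\Delta'$. Choose $S=\diag(s_0,\ldots,s_{n-1})$ and a scalar $c$ with $S^{-1}MS=cP$; this amounts to the telescoping recursion $s_{k+1}/s_k=\Delta'_{k+1,k+1}/c$, which closes up consistently around the cycle provided $c$ is any $n$-th root of $\prod_k\Delta'_{k,k}$. Then $S^{-1}A(G)S$ commutes with $P$ and is therefore a circulant.

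\emph{Main obstacle.} The crux is the freeness argument in step~(i): without the nonvanishing of the entries of the canonical diagonalizer, one could not rule out cyclic subgroups of $S_n$ of order $n$ whose generator fails to be an $n$-cycle (for example $(1\,2\,3)(4\,5)\in S_6$ has order $6$ but is not a $6$-cycle). Once freeness is established, the remaining orbit-counting and the diagonal-switching bookkeeping are essentially mechanical.
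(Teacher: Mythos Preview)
The paper does not actually supply a proof of this theorem: it is quoted from Cameron \etal\ \cite{cfghst14} and used as a black box (e.g.\ in the proof of Theorem~\ref{thm:circulant-timing}). So there is nothing in the present paper to compare your argument against, and your write-up has to be assessed on its own.

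Your argument is sound in substance. Two small points are worth tightening. First, your arrow labels are swapped: the paragraph marked ``$\Leftarrow$'' proves ``circulant $\Rightarrow$ $\SwAut(G)$ cyclic of order $n$'', which is the $\Rightarrow$ direction of the stated biconditional, and vice versa. Second, and more importantly, your freeness argument implicitly uses that $\SwAut(G)$ is taken modulo nonzero scalar matrices: literally every scalar matrix commutes with $A(G)$, so without this convention the group could never be finite, let alone cyclic of order $n$. With that convention in place, your step~(i) is exactly right: if the permutation part $Q$ of $M$ fixes a vertex $v$, the flatness of the eigenbasis forces $\mu_k=\Delta_{v,v}$ for all $k$ and hence $M$ is scalar, i.e.\ trivial in the quotient. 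That gives a free action of a cyclic group of order $n$ on an $n$-set, so the generator is an $n$-cycle. Steps~(ii)--(iii) are routine; the closing condition $c^{n}=\prod_{k}\Delta'_{k,k}$ is the right consistency check, and once $S^{-1}MS=cP$ the conjugated adjacency matrix commutes with $P$ and is therefore circulant. You do not actually need the canonical form from Lemma~\ref{lemma:canonical-flatness} here --- the nonvanishing of all entries of the flat diagonalizer (Theorem~1, part~2) already suffices.
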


\par\noindent
In what follows, we provide new characterizations of circulants with universal perfect state transfer.
The first one is based on the set of times when perfect state transfer occur. 
The second one is based on the explicit form of allowable weights.

\medskip
Recall that $T_{u,v}$ is the set of times (positive real numbers) when perfect state transfer
occur from vertex $u$ to vertex $v$. Also, we denote $t_{u,v}$ as the smallest element of $T_{u,v}$.

\begin{theorem} \label{thm:circulant-timing}
Let $G$ be a $n$-vertex graph with universal perfect state transfer.
Assume that $t_{0,k} < t_{0,k+1}$ for all $k=1,\ldots,n-2$ and that
$t_{0,1} = \min\{t_{k,k+1} : k \in \ZZ/n\ZZ\}$.
Then, $G$ is switching isomorphic to a circulant
if and only if
\begin{equation}
t_{k,k+1} = t_{0,1},
\end{equation} 
for all $k \in \ZZ/n\ZZ$.

\begin{proof}
Let $A$ be the adjacency matrix of $G$.

($\Rightarrow$) 
Suppose that $G$ is switching isomorphic to a circulant.
Consider the set of times when perfect state transfer occur in $G$:
\begin{equation}
T = \{t \in \mathbb{R}^{+} : \exists j,k \in \ZZ/n\ZZ, \ |\bra{k}e^{-\ii At}\ket{j}| = 1\}.
\end{equation}
Since $T$ is a discrete additive subgroup of $\mathbb{R}$, it has a minimum.
Without loss of generality, assume that $t_{0,1} = \min T$.
Thus,
\begin{equation}
\bra{1}e^{-\ii A t_{0,1}}\ket{0} = \gamma,
\end{equation}
for some $\gamma \in \mathbb{T}$.
By Theorem \ref{thm:circulant-cyclic}, $\SwAut(G)$ is cyclic of order $n$.
We may assume that $\tilde{P} = P_{\phi}D$ generates $\SwAut(G)$ where
$\phi = (0 \ 1 \ \ldots \ n-1)$ and $D$ is a diagonal switching matrix.
Since $\tilde{P}$ is a switching automorphism, $\tilde{P}^{-1}A\tilde{P} = A$,
which implies $e^{-\ii At} = \tilde{P}^{-1}e^{-\ii At}\tilde{P}$.
Therefore,
\begin{equation}
\bra{2}e^{-\ii A t_{0,1}}\ket{1} = 
\tilde{\gamma}\bra{1}\tilde{P}^{-1}e^{-\ii A t_{0,1}}\tilde{P}\ket{0} =
\tilde{\gamma}\bra{1}e^{-\ii A t_{0,1}}\ket{0},
\end{equation}
for some $\tilde{\gamma} \in \mathbb{T}$.
So, perfect state transfer occurs from $1$ to $2$ at time $t_{0,1}$.
By repeatedly using the same argument, we see that perfect state transfer occurs 
from $k$ to $k+1$ at time $t_{0,1}$, for $k=0,1,\ldots,n-2$.

($\Leftarrow$) 
Suppose $t_{k,k+1} = t_{0,1}$, for $k \in \ZZ/n\ZZ$.
This implies that at time $t_{0,1}$ perfect state transfer occurs from $k$ to $k+1$
(simultaneously) from $k$ to $k+1$ for each $k \in \ZZ/n\ZZ$. 
So, assume 
\begin{equation}
\bra{k+1}e^{-\ii At_{0,1}}\ket{k} = \gamma_{k},
\end{equation} 
for some complex unit weight $\gamma_{k} \in \mathbb{T}$.
Thus,
\begin{equation}
e^{-\ii At_{0,1}} =
\begin{bmatrix}
0 & 0 & \ldots & 0 & \gamma_{n-1} \\
\gamma_{0} & 0 & \ldots & 0 & 0 \\
0 & \gamma_{1} & \ldots & 0 & 0 \\
\vdots & \vdots & \ldots & \vdots &  \vdots \\
0 & 0 & \ldots & \gamma_{n-2} & 0
\end{bmatrix}
=
\begin{bmatrix}
0 & 0 & \ldots & 0 & 1 \\
1 & 0 & \ldots & 0 & 0 \\
0 & 1 & \ldots & 0 & 0 \\
\vdots & \vdots & \ldots & \vdots &  \vdots \\
0 & 0 & \ldots & 1 & 0
\end{bmatrix}
\begin{bmatrix}
\gamma_{0} & 0 & \ldots & 0 \\
0 & \gamma_{1} & \ldots & 0 \\
0 & 0 & \ldots & 0 \\
\vdots & \vdots & \ldots & \vdots \\
0 & 0 & \ldots & \gamma_{n-1} 
\end{bmatrix},
\end{equation}
which shows that $e^{-\ii At_{0,1}}$ is a monomial matrix.
Since it commutes with $A$, it is a switching automorphism of $G$.
Moreover, it generates $\SwAut(G)$.
Thus, by Theorem \ref{thm:circulant-cyclic}, $G$ is switching isomorphic to a circulant.
\end{proof}
\end{theorem}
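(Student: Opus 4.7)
The plan is to use Theorem \ref{thm:circulant-cyclic} as the pivot in both directions, with the unitary $U := e^{-\ii A t_{0,1}}$ itself serving either as a cyclic generator of $\SwAut(G)$ or as the object on which a preexisting cyclic generator acts. Since Theorem \ref{thm:circulant-cyclic} characterizes switching circulants exactly as those $G$ with $\SwAut(G)$ cyclic of order $n$, the proof reduces to extracting or exploiting such a generator from the timing data $t_{k,k+1}$.

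For the forward direction, assume $G$ is switching isomorphic to a circulant. Then Theorem \ref{thm:circulant-cyclic} supplies a generator $\tilde{P} = P_\phi D \in \SwAut(G)$ with $\phi$ the cyclic $n$-shift and $D$ a diagonal unitary. Because $\tilde{P}$ commutes with $A$, it commutes with $U$. I would observe that $\tilde{P}^{-k}\ket{k} = c_k \ket{0}$ and $\bra{k+1}\tilde{P}^{k+1} = \overline{c'_{k+1}} \bra{0}$ for suitable scalars $c_k, c'_{k+1} \in \TT$ (since powers of a monomial matrix are monomial), and then use $U = \tilde{P}^{k+1} U \tilde{P}^{-(k+1)}$ to deduce
\begin{equation}
|\bra{k+1} U \ket{k}| \;=\; |\bra{1} U \ket{0}| \;=\; 1.
\end{equation}
This yields $t_{k,k+1} \le t_{0,1}$ for every $k$, and the minimality hypothesis $t_{0,1} = \min\{t_{k,k+1}\}$ upgrades the inequality to equality.

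For the backward direction, suppose $t_{k,k+1} = t_{0,1}$ for every $k \in \ZZ/n\ZZ$, so $U$ has a unit-magnitude entry at each position $(k+1,k)$. Since $U$ is unitary, every row and column is a unit vector; a single unit-magnitude entry in a row (resp.\ column) forces all remaining entries in that row (resp.\ column) to vanish. Hence $U$ is a monomial matrix whose permutation part is exactly the $n$-cycle $\phi$, and $U$ commutes with $A$ automatically, so $U \in \SwAut(G)$. The cyclic subgroup generated by $U$ has order $n$; combined with Theorem 1, which asserts that $\SwAut(G)$ is cyclic of order dividing $n$, this forces $\SwAut(G)$ to be cyclic of order exactly $n$, and Theorem \ref{thm:circulant-cyclic} then delivers the conclusion. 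The main obstacle is the phase bookkeeping in the forward direction: the diagonal part $D$ of $\tilde{P}$ introduces unit-modulus scalars whenever $\tilde{P}^{\pm k}$ is pushed across a basis vector, and one must verify that these scalars only alter the global phase of the $(k+1,k)$ entry of $U$ and not its magnitude — a direct consequence of unitarity of $\tilde{P}$ and the relation $\tilde{P}^k \ket{0} \in \TT \cdot \ket{k}$. The backward direction, by contrast, is essentially forced by unitarity once one recognizes that simultaneous unit-magnitude entries along the cycle pin down the shape of $U$ completely.
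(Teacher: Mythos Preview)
Your proposal is correct and follows essentially the same route as the paper: in both directions you pivot on Theorem~\ref{thm:circulant-cyclic}, using the cyclic generator $\tilde{P}=P_\phi D$ to transport the $(0,1)$ entry of $U=e^{-\ii A t_{0,1}}$ to every $(k,k+1)$ entry in the forward direction, and recognizing $U$ itself as a monomial matrix with $n$-cycle permutation part in the backward direction. You are a bit more explicit than the paper in two places---you spell out that the conjugation only shows $t_{k,k+1}\le t_{0,1}$ and that the minimality hypothesis is what upgrades this to equality, and in the converse you invoke Theorem~1 to confirm that the order-$n$ cyclic subgroup generated by $U$ is all of $\SwAut(G)$---both of which the paper leaves implicit. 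One small slip: to land on $\bra{1}U\ket{0}$ you want to conjugate by $\tilde{P}^{k}$ rather than $\tilde{P}^{k+1}$ (so that $\tilde{P}^{-k}\ket{k}\in\TT\cdot\ket{0}$ and $\bra{k+1}\tilde{P}^{k}\in\TT\cdot\bra{1}$ simultaneously); with $\tilde{P}^{k+1}$ you would instead relate the $(k+1,k)$ entry to the $(0,n-1)$ entry, which is also fine but not what you wrote.
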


\bigskip

Next we find explicit forms for the weights on circulants which have universal perfect state transfer.
But, first we state a spectral characterization of circulants with universal perfect state transfer
proved by Cameron \etal \cite{cfghst14}.

\begin{theorem} \label{thm:circulant-upst-eigenvalue}
(Cameron \etal \cite{cfghst14}) \\
Let $G$ be a graph that is switching equivalent to a circulant. Then
$G$ has universal perfect state transfer 
if and only if 
for some integer $q$ coprime with $n$,
for real numbers $\alpha, \beta$ with $\beta > 0$, 
the eigenvalues of $G$ are given by 
\begin{equation} \label{eqn:eigenvalue-form}
\lambda_{k} = \alpha + \beta(qk + c_{k}n),
\ \hspace{.5in} \
k =0,1,\ldots,n-1,
\end{equation} 
where $c_{k}$ are integers.
\end{theorem}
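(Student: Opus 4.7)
The plan is to apply Theorem \ref{thm:canonical-flatness-upst} with the canonical diagonalizing matrix taken to be the Fourier matrix $F_n$. First I would observe that switching equivalence preserves both the spectrum of the adjacency matrix and the universal PST property (a monomial conjugation is a unitary similarity up to relabeling rows/columns by a permutation and multiplying by unit-modulus diagonals), so one may assume $G$ itself is a circulant. Then $F_n$, with entries $\bra{j}F_n\ket{k} = \zeta_n^{jk}/\sqrt{n}$, diagonalizes $A(G)$; since its first row and first column are the all-ones vector, $F_n$ is already in canonical form, with phases $\alpha_{\ell,k} \equiv 2\pi\ell k/n \pmod{2\pi}$.

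By Theorem \ref{thm:canonical-flatness-upst}, universal PST is equivalent to the existence, for each $\ell$, of a time $t_\ell \in \RR$ with
\begin{equation}
(\lambda_k - \lambda_0)\, t_\ell \equiv \frac{2\pi\ell k}{n} \pmod{2\pi}, \qquad k = 1,\ldots,n-1.
\end{equation}
For the forward direction I would specialize this to $\ell = 1$: there is a time $t_1 > 0$ and integers $c_k$ with $(\lambda_k - \lambda_0)\, t_1 = 2\pi k/n + 2\pi c_k$. Setting $\alpha = \lambda_0$, $\beta = 2\pi/(n t_1) > 0$, and $q = 1$ (trivially coprime to $n$), this rearranges to $\lambda_k = \alpha + \beta(qk + c_k n)$, as claimed.

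For the converse, given $\lambda_k = \alpha + \beta(qk + c_k n)$ with $\gcd(q,n) = 1$, define $s_\ell \in \{1,\ldots,n-1\}$ by $s_\ell \equiv q^{-1}\ell \pmod n$, and set $t_\ell = 2\pi s_\ell/(\beta n) > 0$. A short computation using $q s_\ell \equiv \ell \pmod n$ shows $(\lambda_k - \lambda_0)\, t_\ell \equiv 2\pi \ell k/n \pmod{2\pi}$ (the contribution from $\beta(c_k - c_0)n\cdot t_\ell$ is always an integer multiple of $2\pi$). Theorem \ref{thm:canonical-flatness-upst} then delivers universal PST.

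The main subtlety, as I see it, is the mod-$2\pi$ bookkeeping: Theorem \ref{thm:canonical-flatness-upst} is stated as a strict equality, but its proof makes clear that the natural reading is a congruence modulo $2\pi$, and this is what the integers $c_k$ absorb in the eigenvalue formula. A secondary expository point is the asymmetry between the two directions: the forward direction naturally produces $q=1$, while the converse accommodates any $q$ coprime to $n$. This is consistent because multiplication by such a $q$ is a bijection of $\ZZ/n\ZZ$ fixing $0$, corresponding to a cyclic relabeling of the Fourier eigenvectors that preserves the underlying circulant structure.
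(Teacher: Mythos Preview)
The paper does not give its own proof of this theorem: it is quoted from Cameron \etal\ \cite{cfghst14} and used as a black box. So there is nothing to compare against directly.

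Your argument is correct and, in fact, recovers the cited result as a corollary of the paper's own Theorem~\ref{thm:canonical-flatness-upst}. Specializing the canonical type-II diagonalizer to $F_n$ gives $\alpha_{\ell,k}\equiv 2\pi\ell k/n\pmod{2\pi}$, and the two directions go through exactly as you outline. The observation that the forward direction yields $q=1$ while the converse handles arbitrary $q$ coprime to $n$ is the right way to reconcile the apparent asymmetry. Two small points worth tightening if you write this up: (i) to get $\beta>0$ you need $t_1>0$, which is not literally what Theorem~\ref{thm:canonical-flatness-upst} hands you, but follows since universal PST forces periodicity at vertex $0$ and you may shift $t_1$ by a positive multiple of the period; (ii) your claim that switching equivalence preserves universal PST is correct provided the monomial matrix has unit-modulus diagonal, which is the implicit convention for Hermitian gain graphs here (and is what the proof of Lemma~\ref{lemma:canonical-flatness} actually uses).
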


Let $A$ be the adjacency matrix of $G$.
In Theorem \ref{thm:circulant-upst-eigenvalue},
we may assume $\alpha = 0$ by allowing a diagonal shift $A + \alpha\II$,
which does not affect the quantum walk.
Furthermore, we may assume $\beta = 1$ by allowing the time scaling $\frac{1}{\beta}A$,
which does not affect perfect state transfer.
Finally, we may multiply the adjacency matrix with the multiplicative inverse of $q$ modulo $n$
(to cancel the factor $q$ in $qk$).
In summary, we have the following.

\begin{corollary} \label{cor:simpler-eigenvalue-form}
Let $G$ be a graph that is switching equivalent to a circulant. Then
$G$ has universal perfect state transfer 
if and only if 
$G$ is switching equivalent to a graph with eigenvalues 
\begin{equation} \label{eqn:simpler-form}
\lambda_{k} = k + c_{k}n,
\ \hspace{.5in} \
k = 0,1,\ldots,n-1
\end{equation}
where $c_{k}$ are integers.
\end{corollary}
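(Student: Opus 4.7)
The plan is to derive the corollary directly from Theorem \ref{thm:circulant-upst-eigenvalue} by composing three reductions on the adjacency matrix $A$: two quantum-walk symmetries that preserve UPST (an affine shift in the spectrum and a time rescaling) followed by a purely combinatorial relabeling that exploits $\gcd(q,n) = 1$.

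The backward direction is essentially a quotation: if $G$ is switching equivalent to a graph with eigenvalues $\lambda_k = k + c_k n$, then switching equivalence preserves spectra, so $G$ itself has this spectrum, and Theorem \ref{thm:circulant-upst-eigenvalue} with $\alpha = 0$, $\beta = 1$, $q = 1$ gives UPST.

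For the forward direction I would start from the eigenvalue form $\lambda_k = \alpha + \beta(qk + c_k n)$ with $\gcd(q,n) = 1$, $\alpha \in \RR$, $\beta > 0$ produced by Theorem \ref{thm:circulant-upst-eigenvalue}. First I would replace $A$ by $A - \alpha\II$; since
\[
e^{-\ii(A - \alpha\II)t} = e^{\ii\alpha t}\,e^{-\ii A t},
\]
the shift contributes only a global phase and UPST is preserved, while the eigenvalues simplify to $\beta(qk + c_k n)$. Second I would rescale, passing to $(A - \alpha\II)/\beta$; the corresponding walk satisfies
\[
e^{-\ii(A - \alpha\II)t/\beta} = e^{-\ii(A - \alpha\II)(t/\beta)},
\]
which is a pure time rescaling and hence preserves perfect state transfer, leaving eigenvalues $qk + c_k n$. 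Third, since $\gcd(q,n) = 1$, the map $k \mapsto qk \bmod n$ is a permutation of $\{0,\ldots,n-1\}$; writing $qk = \ell + j_\ell n$ for the appropriate integer $j_\ell$ when $\ell = qk \bmod n$, one gets $qk + c_k n = \ell + (j_\ell + c_k)n$, so the multiset of eigenvalues takes the desired form $\{\ell + c'_\ell n : \ell = 0, \ldots, n-1\}$.

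The only real wrinkle---and the step I expect to require the most care in a fully polished write-up---is that the diagonal shift and rescaling do not strictly produce matrices switching equivalent to $A$ in the monomial-conjugation sense (which preserves spectra), so the phrase ``$G$ is switching equivalent to a graph with eigenvalues $k + c_k n$'' in the corollary is to be read in the spirit of the paragraph immediately preceding it: these operations manifestly leave the UPST property invariant and are therefore tacitly absorbed. Making this precise would mean either broadening the equivalence on adjacency matrices to include affine transformations $A \mapsto (A - \alpha\II)/\beta$, or restating the corollary as saying that, after such an affine transformation and a reindexing by a unit modulo $n$, the spectrum takes the canonical form $\{k + c_k n\}$.
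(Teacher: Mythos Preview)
Your proposal is correct and follows the same three-step reduction as the paper's own argument preceding the corollary (diagonal shift to kill $\alpha$, rescale by $1/\beta$, then eliminate $q$), including the honest acknowledgment that ``switching equivalent'' is being read loosely to absorb affine changes. The only cosmetic difference is in the third step: the paper multiplies the adjacency matrix by the integer $q^{-1}\pmod n$, whereas you reindex the eigenvalue list via the bijection $k\mapsto qk\bmod n$; both normalizations yield the form $\lambda_\ell=\ell+c'_\ell n$.
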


Next, we show a general form for the coefficients of a circulant which has universal 
perfect state transfer.

\begin{theorem} \label{thm:circ-coeff-upst}
Let $\Circ(a_{0},\ldots,a_{n-1})$ be a circulant with universal perfect state transfer. 
Then, 
we have
\begin{equation}
a_{j} = \frac{1}{\zeta_{n}^{-j}-1} + \sum_{k=0}^{n-1} c_{k}\zeta_{n}^{-jk},
\hspace{0.2in}
j=1,\ldots,n-1 
\end{equation}
for integers $c_{k}$, where $k=0,\ldots,n-1$.

\begin{proof}
Let $G = \Circ(a_{0},\ldots,a_{n-1})$.
By Corollary \ref{cor:simpler-eigenvalue-form}, 
the eigenvalues of $G$ are of the form $\lambda_{k} = k + c_{k}n$,
for some integers $c_{k}$, where $k \in \ZZ/n\ZZ$.
Since circulants are diagonalized by the Fourier matrix (see Biggs \cite{biggs}),
the coefficients of $G$ are given by
\begin{equation}
a_{j} = \frac{1}{n} \sum_{k=0}^{n-1} \lambda_{k}\zeta_{n}^{-jk},
\ \hspace{.5in} \
j=1,\ldots,n-1.
\end{equation}
Using the assumed form of $\lambda_{k}$, we get
\begin{equation}
a_{j} 
	= \frac{1}{n} \sum_{k=0}^{n-1} (k + c_{k}n)\zeta_{n}^{-jk} 
	= \frac{1}{n} \sum_{k=0}^{n-1} k\zeta_{n}^{-jk} + \sum_{k=0}^{n-1} c_{k}\zeta_{n}^{-jk}.
\end{equation}
Let $U = \sum_{k=0}^{n-1} k\zeta_{n}^{-jk}$ and
$L = \sum_{k=1}^{n-2} \sum_{\ell=1}^{k} \zeta_{n}^{-j\ell}$.
Note that 
\begin{equation}
L + U = (n-1)\sum_{k=1}^{n-1} \zeta_{n}^{-jk} = 1-n.
\end{equation}
Now, we have
\begin{eqnarray}
L & = & \sum_{k=1}^{n-2} \sum_{\ell=1}^{k} \zeta_{n}^{-j\ell} 
	= \sum_{k=1}^{n-2} \left(\frac{\zeta_{n}^{-j(k+1)}}{\zeta_{n}^{-j}-1} - 1\right) \\
	& = & \frac{1}{\zeta_{n}^{-j}-1}
			\left( \sum_{k=2}^{n-1} \zeta_{n}^{-jk} - (n-2)\zeta_{n}^{-j} \right) \\
	& = & 1-n - \frac{n}{\zeta_{n}^{-j}-1}.
\end{eqnarray}
Thus, $U = n/(\zeta_{n}^{-j}-1)$. 
Therefore, 
\begin{equation}
a_{j} = \frac{1}{\zeta_{n}^{-j}-1} + \sum_{k=0}^{n-1} c_{k}\zeta_{n}^{-jk}.
\end{equation}
\end{proof}
\end{theorem}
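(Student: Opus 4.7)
The plan is to directly compute the inverse Fourier transform of the eigenvalues, using the normalized form given by Corollary \ref{cor:simpler-eigenvalue-form}. Since the theorem asserts a coefficient formula for a circulant (not merely a graph switching equivalent to one), I can apply the corollary to conclude that the eigenvalues take the form $\lambda_k = k + c_k n$ for integers $c_k$; the diagonal shift and global scaling used to arrive at this form do not change the off-diagonal circulant coefficients, and any monomial switching absorbed along the way can be assumed trivial since we are already working with the circulant representative.

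Next I would use the fact that any circulant is diagonalized by the Fourier matrix, so the inverse transform gives
\begin{equation}
a_j \;=\; \frac{1}{n}\sum_{k=0}^{n-1}\lambda_k\,\zeta_n^{-jk}
\;=\; \frac{1}{n}\sum_{k=0}^{n-1} k\,\zeta_n^{-jk} \;+\; \sum_{k=0}^{n-1} c_k\,\zeta_n^{-jk}.
\end{equation}
The second summand already has the required form, so everything reduces to evaluating $U := \sum_{k=0}^{n-1} k\,\zeta_n^{-jk}$ in closed form for $j \in \{1,\ldots,n-1\}$.

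The core computation is to show $U = n/(\zeta_n^{-j}-1)$. The cleanest approach is via differentiation of the geometric series: set $x = \zeta_n^{-j}$, so that $x \neq 1$ and $x^n = 1$. From $S(x) = \sum_{k=0}^{n-1} x^k = (x^n-1)/(x-1)$, differentiating yields $x S'(x) = \sum_{k=0}^{n-1} k x^k$, and evaluating the derivative at a point where $x^n = 1$ collapses the quotient to $n x^{n-1}/(x-1)$, so that $xS'(x) = n x^n/(x-1) = n/(x-1)$. Substituting back gives exactly $U = n/(\zeta_n^{-j}-1)$. Alternatively, one can use the telescoping double-sum identity with $L = \sum_{k=1}^{n-2}\sum_{\ell=1}^{k}\zeta_n^{-j\ell}$, noting that $L + U = (n-1)\sum_{k=1}^{n-1}\zeta_n^{-jk} = 1 - n$, and evaluating $L$ via the inner geometric sum.

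Combining, $(1/n)U = 1/(\zeta_n^{-j}-1)$, which delivers the claimed formula. The only conceptual subtlety is the justification that passing from the switching-equivalent normalization of Corollary \ref{cor:simpler-eigenvalue-form} back to the original circulant preserves the coefficient formula; the actual arithmetic is routine once $U$ is identified, and that identification is the sole non-bookkeeping step.
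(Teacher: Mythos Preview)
Your proposal is correct and follows essentially the same approach as the paper: invoke Corollary~\ref{cor:simpler-eigenvalue-form} for the eigenvalue form $\lambda_k = k + c_k n$, apply the inverse Fourier transform, and reduce to the evaluation of $U = \sum_{k=0}^{n-1} k\,\zeta_n^{-jk}$. The only difference is in how $U$ is computed: the paper uses the auxiliary double sum $L$ and the identity $L+U = 1-n$, whereas your primary route differentiates the finite geometric series---a slightly more direct and standard calculation (and you even mention the paper's $L$-trick as an alternative). One small remark: your parenthetical that ``the diagonal shift and global scaling \ldots\ do not change the off-diagonal circulant coefficients'' is not literally true for the scaling, but the paper's proof makes the same tacit identification of $G$ with its normalized representative, so this is not a gap relative to the original argument.
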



\section{Non-circulants with universal state transfer}

In this section, we show a construction of a family of non-circulant graphs with universal 
perfect state transfer. 
This provides the first known examples of non-circulant families with universal perfect state transfer.

\newcommand{\Idx}{\vartheta}
\newcommand{\Row}[1]{\Idx_{#1}}
\newcommand{\Col}[1]{\Idx_{#1}}

\begin{theorem} \label{thm:noncirculant-upst-graph}
Let $n = ab$ be an integer where $a \ge b \ge 2$ are integers.
Fix an integer $\beta \ge 2$ and for a positive integer $d$, 
let $\vartheta_{d}$ be a function which maps elements of $\ZZ/n\ZZ$ 
to the positive integers defined as
\begin{equation}
\Idx_{d}(x) := \beta\qdiv{x}{d}d + \rmod{x}{d}.
\end{equation}
Let $X$ be a $n \times n$ matrix whose $(j,k)$-entry is given by
\begin{equation}
X_{j,k} 
= \frac{1}{\sqrt{n}}
	\zeta_{\beta n}^{\Row{a}(j)\Col{b}(k)}.
\end{equation}
Then, $X$ is type-II. 
Moreover, if $G$ is the graph with eigenvalues $\{\Col{b}(k) : k=0,1,\ldots,n-1\}$
whose adjacency matrix is unitarily diagonalized by $X$,
then $G$ has universal perfect state transfer.

\begin{proof}
First, we show that $X$ is type-II. It is clear that $X$ is flat from its definition.
So, it suffices to show that the columns of $X$ form an orthonormal set. 
In what follows, for $k,\ell \in \ZZ/n\ZZ$, let
$Q_{b}(k,\ell) = \qdiv{k}{b} - \qdiv{\ell}{b}$
and
$R_{b}(k,\ell) = \rmod{k}{b} - \rmod{\ell}{b}$.
If $X_{k}$ and $X_{\ell}$ are the $k$th and $\ell$th columns of $X$, then
\begin{eqnarray}
\braket{X_{\ell}}{X_{k}}
	& = & \frac{1}{n} \sum_{j=0}^{n-1} \zeta_{\beta n}^{M_{b}(k,\ell)\Row{a}(j)},
	\ \ 
	\mbox{where $M_{b}(k,\ell) = \beta Q_{b}(k,\ell)b + R_{b}(k,\ell)$} \\
	& = & \frac{1}{n} \sum_{r=0}^{a-1} \zeta_{\beta n}^{M_{b}(k,\ell)r}
			\sum_{q=0}^{b-1} \left(\zeta_{n}^{aM_{b}(k,\ell)} \right)^{q}.
\end{eqnarray}
But, note that for any integer $M \neq 0$, provided $\zeta_{b}^{M} \neq 1$, we have
\begin{equation}
\sum_{q=0}^{b-1} (\zeta_{n}^{aM})^{q} 
	= \frac{\zeta_{n}^{abM} - 1}{\zeta_{n}^{aM}-1} 
	= 0,
\end{equation}
since $\zeta_{n}^{ab} = 1$.
So, if $R_{b}(k,\ell) \neq 0$, then $M_{b}(k,\ell) \neq 0\pmod{b}$, which implies
$\braket{X_{\ell}}{X_{k}} = 0$.
On the other hand, if $R_{b}(k,\ell) = 0$, then $Q_{b}(k,\ell) \neq 0$, 
for otherwise $k = \ell$. Here, we have
\begin{equation}
\braket{X_{\ell}}{X_{k}}
	= \frac{1}{n} 
			\sum_{q=0}^{b-1} \zeta_{1}^{q\beta Q_{b}(k,\ell)} 
			\sum_{r=0}^{a-1} \left(\zeta_{a}^{Q_{b}(k,\ell)}\right)^{r}
\end{equation}
Thus, $\braket{X_{\ell}}{X_{k}} = 0$ also holds.

Next, we show that $G$ has universal perfect state transfer. 
If we let 
\begin{equation}
\alpha_{j,k} = \frac{2\pi}{\beta n} \Row{a}(j)\Col{b}(k),
\end{equation} 
for $j,k \in \ZZ/n\ZZ$, 
then $X_{j,k} = e^{\ii\alpha_{j,k}}$.
For each $j \in \ZZ/n\ZZ$, let $t_{j} = (2\pi/\beta n)\Row{a}(j)$. 
Then,
\begin{equation}
\lambda_{k} t_{j} = \alpha_{j,k}
\end{equation}
holds for all $k \in \ZZ/n\ZZ$, since $\lambda_{k} = \Col{b}(k)$ and $\lambda_{0} = 0$.
By Theorem \ref{thm:canonical-flatness-upst}, this shows that $G$ has universal perfect state transfer.

Finally, we show that $G$ is not switching equivalent to a circulant.
By Theorem \ref{thm:circulant-timing}, it suffices to show that $t_{j+1}-t_{j}$ 
are not all equal. From the definition of $\Row{a}(j)$, we have
$t_{1} - t_{0} = \tfrac{2\pi}{\beta n}$ whereas $t_{a} - t_{a-1} = \tfrac{2\pi}{\beta n}((\beta - 1)a + 1)$.
\end{proof}
\end{theorem}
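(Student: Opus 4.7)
The plan is to establish three assertions in order: that $X$ is type-II, that the associated graph $G$ has universal perfect state transfer, and that $G$ is not switching equivalent to a circulant. These rely respectively on a direct orthogonality computation, on the canonical-flatness characterization (Theorem~\ref{thm:canonical-flatness-upst}), and on the timing characterization of circulants (Theorem~\ref{thm:circulant-timing}).

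For the type-II claim, flatness is immediate from the definition since every entry of $X$ has magnitude $1/\sqrt{n}$. For unitarity I would decompose each row index $j \in \ZZ/n\ZZ$ uniquely as $j = qa + r$ with $0 \le q < b$ and $0 \le r < a$, so that $\Row{a}(j) = \beta q a + r$, and split the exponent $\Col{b}(k) - \Col{b}(\ell)$ as $\beta b\, Q_b(k,\ell) + R_b(k,\ell)$, where $Q_b$ and $R_b$ record the quotient and remainder differences modulo $b$. Substituting into $\braket{X_\ell}{X_k}$, the sum factors as an outer geometric sum over $r$ and an inner geometric sum over $q$. When $R_b(k,\ell) \ne 0$, the inner sum $\sum_{q=0}^{b-1}(\zeta_n^{aM_b(k,\ell)})^q$ telescopes to zero because $\zeta_n^{abM_b(k,\ell)} = 1$ while $\zeta_n^{aM_b(k,\ell)} \ne 1$; when $R_b(k,\ell) = 0$ we must have $Q_b(k,\ell) \ne 0$ (else $k = \ell$), and then the outer sum in $r$ is the geometric series that vanishes. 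Either way the columns are orthogonal, giving unitarity.

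For universality, note that $\Row{a}(0) = \Col{b}(0) = 0$, so the first row and first column of $X$ are the all-ones vector and $X$ is in canonical form. Theorem~\ref{thm:canonical-flatness-upst} then reduces the task to exhibiting, for each $\ell$, a time $t_\ell$ with $(\lambda_k - \lambda_0)\,t_\ell = \alpha_{\ell,k}$ for every $k$. With $\lambda_0 = 0$, $\lambda_k = \Col{b}(k)$, and $\alpha_{\ell,k} = \frac{2\pi}{\beta n}\Row{a}(\ell)\Col{b}(k)$, the choice $t_\ell = \frac{2\pi}{\beta n}\Row{a}(\ell)$ satisfies the condition identically in $k$.

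For non-circulancy, I would apply Theorem~\ref{thm:circulant-timing}. Since $\Row{a}$ is strictly increasing on $\{0,1,\ldots,n-1\}$ (within a block of size $a$ it increments by $1$, and at each block boundary it jumps by $(\beta-1)a + 1 > 0$), the ordering hypothesis $t_{0,k} < t_{0,k+1}$ holds and $t_{0,1} = \frac{2\pi}{\beta n}$ is the minimum of all $t_{k,k+1}$. But $t_{k,k+1} = t_{k+1}-t_k = \frac{2\pi}{\beta n}$ when $k+1$ is not a multiple of $a$, whereas $t_{a-1,a} = \frac{2\pi}{\beta n}\bigl((\beta-1)a + 1\bigr)$, which is strictly larger since $\beta,a \ge 2$. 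Hence the $t_{k,k+1}$ are not all equal, and Theorem~\ref{thm:circulant-timing} prevents $G$ from being switching equivalent to a circulant. The main obstacle in the argument is the orthogonality computation: one must split the exponential sum along the $a$-direction (through $\Row{a}(j)$) versus the $b$-direction (through $M_b(k,\ell)$) and verify that at least one of the resulting geometric factors vanishes in every case $k \ne \ell$; the remaining two assertions are then direct applications of earlier theorems.
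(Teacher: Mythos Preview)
Your proposal is correct and follows essentially the same route as the paper: the orthogonality computation via the $Q_b,R_b,M_b$ decomposition and the two-case geometric-sum argument, the invocation of Theorem~\ref{thm:canonical-flatness-upst} with $t_\ell = \tfrac{2\pi}{\beta n}\Row{a}(\ell)$, and the comparison of $t_1-t_0$ with $t_a-t_{a-1}$ to invoke Theorem~\ref{thm:circulant-timing} all match the paper's proof. Your version is slightly more explicit in verifying that $\Row{a}$ is strictly increasing so that the ordering hypothesis of Theorem~\ref{thm:circulant-timing} is met, which the paper leaves implicit.
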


\vspace{.1in}

\begin{example}
For even $k \ge 2$, let $G_{k}$ be a graph of order $4$ with eigenvalues $\{0,1,k,k+1\}$.
Let $X$ be the following unitary matrix (whose columns are the eigenvectors of $G_{k}$):
\begin{equation}
X = 
\begin{bmatrix}
1 & 1 & 1 & 1 \\
1 & e^{\ii\pi/k} & e^{\ii\pi} & e^{\ii\pi(k+1)/k} \\
1 & e^{\ii\pi} & e^{\ii\pi k} & e^{\ii\pi} \\
1 & e^{\ii\pi(k+1)/k} & e^{\ii\pi(k+1)} & e^{\ii\pi/k}
\end{bmatrix}
\end{equation}
Let $A = X\Lambda X^{-1}$, where $\Lambda = \diag(0,1,k,k+1)$.
be the adjacency matrix of $G_{k}$.
By Theorem \ref{thm:noncirculant-upst-graph}, $G_{k}$ has universal perfect state transfer. 

Note $G_{6}$ is a {non-circulant} graph with universal perfect state transfer.
The eigenvalues of $G_{6}$ are $\{0,1,6,7\}$ and its adjacency matrix is
\begin{equation}
A =
\begin{bmatrix}
0 & \thf(1+e^{-\ii\pi/6}) & \ohf & \thf(1-e^{-\ii\pi/6}) \\
\thf(1+e^{\ii\pi/6}) & 0 & \thf(1-e^{\ii\pi/6}) & \ohf \\
\ohf & \thf(1-e^{-\ii\pi/6}) & 0 & \thf(1+e^{-\ii\pi/6}) \\
\thf(1-e^{\ii\pi/6}) & \ohf & \thf(1+e^{\ii\pi/6}) & 0
\end{bmatrix}.
\end{equation}
\end{example}


\section{Denseness}

In this section, we show that if a circulant has universal perfect state transfer, 
then all of its coefficients must be nonzero under certain conditions on the order.

\begin{definition}
A circulant $\Circ(a_{0},a_{1},\ldots,a_{n-1})$ is called {\em dense} 
if $a_{j} \neq 0$ for $j=1,\ldots,n-1$.
\end{definition}

Our main result in this section is the following.

\begin{theorem} \label{thm:dense-conjecture}
Let $G = \Circ(a_{0},\ldots,a_{n-1})$ be a circulant with universal perfect state transfer.
If $n$ is a prime, square of a prime, or a power of two, then $G$ is dense.
\end{theorem}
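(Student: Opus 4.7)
The plan is to apply Theorem \ref{thm:circ-coeff-upst} and translate the vanishing of $a_{j}$ into an algebraic statement about units in the cyclotomic ring $\ZZ[\zeta_{n}]$. By that theorem,
\[
a_{j} = \frac{1}{\zeta_{n}^{-j}-1} + \sum_{k=0}^{n-1} c_{k}\zeta_{n}^{-jk}
\]
for some integers $c_{k}$. Suppose, toward a contradiction, that $a_{j} = 0$ for some $j \in \{1,\ldots,n-1\}$. Rearranging gives
\[
\frac{1}{\zeta_{n}^{-j}-1} = -\sum_{k=0}^{n-1} c_{k}\zeta_{n}^{-jk},
\]
and the right hand side is a $\ZZ$-linear combination of roots of unity, hence an algebraic integer lying in $\ZZ[\zeta_{n}]$. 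So it suffices to show that $1/(\zeta_{n}^{-j}-1)$ is \emph{not} an algebraic integer, equivalently, that $1-\zeta_{n}^{j}$ is not a unit in $\ZZ[\zeta_{n}]$.

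The key algebraic input is the classical identity $N_{\QQ(\zeta_{m})/\QQ}(1-\zeta_{m}) = \Phi_{m}(1)$, where $\Phi_{m}(1) = p$ when $m = p^{k}$ is a prime power with $k \geq 1$, and $\Phi_{m}(1) = 1$ when $m > 1$ has at least two distinct prime factors. Let $m = n/\gcd(n,j)$ be the multiplicative order of $\zeta_{n}^{j}$; then $\zeta_{n}^{j}$ is a primitive $m$-th root of unity, and since $1-\zeta_{n}^{j} \in \ZZ[\zeta_{m}] \subseteq \ZZ[\zeta_{n}]$, tower-of-norms yields
\[
N_{\QQ(\zeta_{n})/\QQ}(1-\zeta_{n}^{j}) = \Phi_{m}(1)^{[\QQ(\zeta_{n}):\QQ(\zeta_{m})]}.
\]
Since a unit in $\ZZ[\zeta_{n}]$ must have norm $\pm 1$, it remains to verify that $m$ is a prime power for every $j \in \{1,\ldots,n-1\}$.

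Checking the three cases is routine: if $n$ is prime then $m = n$; if $n = p^{2}$ then $m \in \{p, p^{2}\}$; if $n = 2^{r}$ then $m$ is a positive power of $2$. In every case $\Phi_{m}(1) = p > 1$, so the norm above is a positive integer greater than one, $1-\zeta_{n}^{j}$ is not a unit, and we reach a contradiction with $a_{j} = 0$.

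The main obstacle, or rather the reason the hypothesis on $n$ is essential, is exactly this divisor-counting step: as soon as $n$ has two distinct prime divisors, one can choose $j$ so that $m = n/\gcd(n,j)$ has at least two distinct prime factors, making $1-\zeta_{n}^{j}$ a unit and dissolving the algebraic-integer obstruction. This is fully consistent with the paper's remark that UPST circulants of non-prime-power order can have vanishing coefficients.
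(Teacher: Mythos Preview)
Your proof is correct and is a genuinely cleaner argument than the one in the paper. The paper proves the three cases by three unrelated techniques: for $n$ prime it uses that $\{\zeta_{p}^{j}:1\le j\le p-1\}$ is a $\QQ$-basis of $\QQ(\zeta_{p})$; for $n=p^{2}$ it uses Galois automorphisms to reduce to $a_{1}=0$ (whence the underlying graph is disconnected) or $a_{p}=0$ (ruled out by a direct computation in Lemma~\ref{lemma:odd-divisor}); and for $n=2^{d}$ it invokes Good's linear-independence result (Fact~\ref{fact:good}) together with an explicit evaluation of a certain sum $\Lambda_{0}$. Your argument replaces all three by a single observation: if $a_{j}=0$ then Theorem~\ref{thm:circ-coeff-upst} forces $1/(\zeta_{n}^{-j}-1)\in\ZZ[\zeta_{n}]$, i.e.\ $1-\zeta_{n}^{j}$ is a unit, and a norm computation via $\Phi_{m}(1)=p$ for prime powers $m$ rules this out whenever every proper divisor $m=n/\gcd(n,j)>1$ is itself a prime power. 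This not only unifies the three cases but also pinpoints exactly why the hypothesis is needed --- indeed, the paper later uses the complementary fact (Fact~\ref{fact:washington}) that $1-\zeta_{n}$ \emph{is} a unit when $n$ has two prime factors to build the non-dense examples in Proposition~\ref{prop:nondense-circ-upst}, so your argument dovetails with that converse more transparently than the paper's own proof does.
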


\par\noindent
We will divide the proof of the theorem into several lemmas.
First, we consider the case when the order of the circulant is prime.

\begin{lemma} \label{lemma:dense-prime}
For a prime $p$, let $\Circ(a_{0},\ldots,a_{p-1})$ be a circulant 
with universal perfect state transfer.
Then, $a_{j} \neq 0$ for $j=1,\ldots,p-1$.

\begin{proof}
We may assume that $\lambda_{0} = 0$ (by a diagonal shift).
Since $a_{j} = \frac{1}{p} \sum_{k=1}^{p-1} \lambda_{k}\zeta_{p}^{-jk}$,
we have that $a_{j} \in \QQ(\zeta_{p})$.
Note that $\{\zeta_{p}^{j} : j=1,\ldots,p-1\}$ is a basis for the cyclotomic field extension
$\QQ(\zeta_{p})/\QQ$.
Thus, if $a_{j} = 0$ for some $j \neq 0$, then $\lambda_{k} = 0$ for all $k \neq 0$.
But, this is a contradiction to the assumed form of $\lambda_{k}$ in \eqref{eqn:simpler-form}.
\end{proof}
\end{lemma}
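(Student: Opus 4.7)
The plan is to convert the vanishing of a coefficient $a_j$ into a rational linear dependence among nontrivial $p$-th roots of unity, then invoke the irreducibility of the $p$-th cyclotomic polynomial to get a contradiction with the rigid eigenvalue form forced by Corollary \ref{cor:simpler-eigenvalue-form}.

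First I would apply Corollary \ref{cor:simpler-eigenvalue-form}: after replacing $G$ by a switching equivalent graph, I may assume the eigenvalues take the form $\lambda_k = k + c_k p$ for integers $c_k$, $k = 0, \ldots, p-1$. A further diagonal shift by $-\lambda_0 \II$ (which is a time-reparametrization of the quantum walk and preserves perfect state transfer) lets me assume $\lambda_0 = 0$; the remaining $\lambda_k$ for $k \ne 0$ are still integers of the form $k + c_k p$, and crucially none of them is divisible by $p$ in the range $1 \le k \le p-1$.

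Next I would use the Fourier inversion formula for circulants to write
\begin{equation}
a_j = \frac{1}{p}\sum_{k=0}^{p-1} \lambda_k \zeta_p^{-jk} = \frac{1}{p}\sum_{k=1}^{p-1} \lambda_k \zeta_p^{-jk},
\end{equation}
which exhibits $p\, a_j$ as a $\ZZ$-linear combination of $\{\zeta_p^{-jk} : k = 1, \ldots, p-1\}$. The key algebraic input is that the cyclotomic polynomial $\Phi_p(x) = 1 + x + \cdots + x^{p-1}$ is irreducible over $\QQ$, so $[\QQ(\zeta_p):\QQ] = p-1$ and $\{\zeta_p, \zeta_p^2, \ldots, \zeta_p^{p-1}\}$ is a $\QQ$-basis of $\QQ(\zeta_p)$. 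Since $\gcd(j,p) = 1$, multiplication by $-j$ permutes $\ZZ/p\ZZ \setminus \{0\}$, so $\{\zeta_p^{-jk} : k = 1, \ldots, p-1\}$ is also a $\QQ$-basis. If $a_j = 0$, linear independence forces $\lambda_k = 0$ for every $k = 1, \ldots, p-1$.

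This contradicts the eigenvalue form $\lambda_k = k + c_k p$: for $1 \le k \le p-1$, vanishing would require $p \mid k$, which is impossible. So every $a_j$ with $j \ne 0$ must be nonzero. I do not anticipate a serious obstacle here — the only substantive ingredient is the irreducibility of $\Phi_p$, which is standard. The difficulty, which I expect to surface in the subsequent lemmas handling $n = p^2$ and $n = 2^\ell$, is that those cyclotomic fields have smaller degree $\varphi(n) < n-1$, so the powers $\{\zeta_n, \ldots, \zeta_n^{n-1}\}$ are no longer linearly independent and a more careful argument (likely tracking Galois orbits or working modulo $\Phi_n$) will be required; but for the prime case no such subtlety arises.
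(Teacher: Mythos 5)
Your proof is correct and follows essentially the same route as the paper's: normalize the eigenvalues via Corollary~\ref{cor:simpler-eigenvalue-form} and a diagonal shift so that $\lambda_0=0$, expand $a_j$ by Fourier inversion as a rational combination of $\{\zeta_p^{-jk}: k=1,\dots,p-1\}$, and use that these form a $\QQ$-basis of $\QQ(\zeta_p)$ (since $\gcd(j,p)=1$) to force all $\lambda_k=0$, contradicting $\lambda_k=k+c_kp$ with $p\nmid k$. Your write-up is in fact slightly more careful than the paper's on the points that the exponents $-jk$ permute the basis and that the shifted eigenvalues retain the required form.
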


Second, we consider universal perfect state transfer in circulants
whose order is the square of a prime.

\begin{lemma} \label{lemma:gcd-map}
Given a prime $p$, let $n = p^{2}$.
For a circulant $\Circ(a_{0},\ldots,a_{n-1})$, suppose $a_{j} \in \QQ$ for some $j \neq 0$.
Then, $a_{d} = a_{j}$ for $d = \mygcd(j,n)$.

\begin{proof}
Let $j \in \{1,\ldots,n-1\}$ and $d = \mygcd(j,n)$.
Then, there is $\ell \in (\ZZ/n\ZZ)^{\star}$ so that $j \ell \equiv d \pmod{n}$.
To see this, note $j\ell \equiv d \pmod{n}$ is solvable since $\mygcd(j,n)$ divides $d$.
Moreover, $\mygcd(\ell,n) = 1$ since $\mygcd(\ell, n/d) = 1$ and $\mygcd(\ell,d) = 1$.
Here, we used the fact that $n = p^{2}$.

Since $\Gal(\QQ(\zeta_{n})/\QQ) = (\ZZ/n\ZZ)^{\star}$, there is a field automorphism
$\phi_{\ell}$ of $\QQ(\zeta_{n})$ which fixes $\QQ$ for which 
$\phi_{\ell}(\zeta_{n}) = \zeta_{n}^{\ell}$.
Since $a_{j} = (1/n)\sum_{k=0}^{n-1} \lambda_{k}\zeta_{n}^{-jk}$, we have
\begin{equation}
\phi_{\ell}(a_{j}) 
	= \frac{1}{n}\sum_{k=0}^{n-1} \lambda_{k} \zeta_{n}^{-\ell jk} 
	= \frac{1}{n}\sum_{k=0}^{n-1} \lambda_{k} \zeta_{n}^{-dk} 
	= a_{d}.
\end{equation}
But, $\phi_{\ell}(a_{j}) = a_{j}$ since $a_{j} \in \QQ$. This shows $a_{j} = a_{d}$.
\end{proof}
\end{lemma}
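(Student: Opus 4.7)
The plan is to use the Galois action of $\Gal(\QQ(\zeta_{n})/\QQ) \cong (\ZZ/n\ZZ)^{\star}$ on the cyclotomic field $\QQ(\zeta_n)$. In the context where this lemma will be applied, the eigenvalues $\lambda_k$ are integers by Corollary~\ref{cor:simpler-eigenvalue-form}, so each coefficient
$$a_j = \frac{1}{n}\sum_{k=0}^{n-1} \lambda_{k}\zeta_{n}^{-jk}$$
lies in $\QQ(\zeta_n)$. For any unit $\ell \in (\ZZ/n\ZZ)^{\star}$, the Galois automorphism $\sigma_\ell \colon \zeta_n \mapsto \zeta_n^{\ell}$ fixes each rational $\lambda_k$ and therefore simply permutes these inverse-Fourier expressions: $\sigma_\ell(a_j) = a_{\ell j \bmod n}$. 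If $a_j \in \QQ$, then $\sigma_\ell(a_j) = a_j$, which forces $a_{\ell j \bmod n} = a_j$ for \emph{every} unit $\ell$ modulo $n$.

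The remaining task is to exhibit an $\ell \in (\ZZ/n\ZZ)^{\star}$ with $\ell j \equiv d \pmod n$, where $d = \gcd(j,n)$. Writing $j = dm$, the definition of $d$ gives $\gcd(m, n/d) = 1$, so $m$ is invertible modulo $n/d$ and there is some $\ell_0$ with $\ell_0 m \equiv 1 \pmod{n/d}$, equivalently $\ell_0 j \equiv d \pmod n$. Once a lift of $\ell_0$ to a unit of $\ZZ/n\ZZ$ is produced, the chain $a_j = a_{\ell_0 j \bmod n} = a_d$ closes the argument.

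The place where the hypothesis $n = p^{2}$ is essential---and the step I expect to be the main obstacle---is precisely this lifting. The only divisors of $n$ are $1$, $p$, and $p^{2}$; since $j \neq 0$ rules out $d = p^{2}$, we have $n/d \in \{p^{2}, p\}$, so $n$ and $n/d$ share the same set of prime divisors (namely, just $p$). Consequently any $\ell$ coprime to $n/d$ is automatically coprime to $n$, and the desired unit $\ell$ exists. For a general composite $n$, this last implication breaks down: a solution modulo $n/d$ may lie in a residue class of $\ZZ/n\ZZ$ containing no unit, which is exactly why the conclusion of the lemma cannot be extended verbatim beyond prime-square (or prime-power) moduli.
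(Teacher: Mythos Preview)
Your argument is correct and follows essentially the same route as the paper: apply the Galois automorphism $\sigma_\ell:\zeta_n\mapsto\zeta_n^{\ell}$ to the inverse-Fourier expression for $a_j$ (using that the $\lambda_k$ are rational) to get $\sigma_\ell(a_j)=a_{\ell j\bmod n}$, and then produce a unit $\ell$ with $\ell j\equiv d\pmod n$ by exploiting that $n$ and $n/d$ have the same prime divisor when $n=p^2$. Your justification of this last lifting step is in fact a bit more explicit than the paper's.
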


\begin{lemma} \label{lemma:odd-divisor}
Let $G = \Circ(a_{0},\ldots,a_{n-1})$ be a circulant of order $n$ with
universal perfect state transfer.
If $d$ is an odd divisor of $n$ and $n/d$ is prime, then $a_{d} \neq 0$.

\begin{proof}
Let $p = n/d$ be prime. Then,
\begin{equation} \label{eqn:coeff-at-divisor}
a_{d} = \frac{1}{n} \sum_{k=0}^{n-1} \lambda_{k}\zeta_{n}^{-dk}
	= \frac{1}{p} \sum_{k=0}^{p-1} \Lambda_{k}\zeta_{p}^{-k},
\end{equation}
where $\Lambda_{k} = \frac{1}{d}\sum_{\ell=0}^{d-1} \lambda_{k + \ell p}$.
By Corollary \ref{cor:simpler-eigenvalue-form},
we have 
\begin{equation}
\Lambda_{k} 
	= \frac{1}{d} \sum_{\ell=0}^{d-1} (k + \ell p + n\ZZ) 
	= k + \frac{p(d-1)}{2} + n\ZZ.
\end{equation}
Using this in \eqref{eqn:coeff-at-divisor} combined with the fact that
 $\sum_{k=0}^{p-1} \zeta_{p}^{-k} = 0$, we get
\begin{eqnarray}
a_{d} 
	& = & \frac{1}{p} \sum_{k=0}^{p-1} \left(k + \frac{p(d-1)}{2} + n\ZZ\right) \zeta_{p}^{-k} \\
	& = & \frac{1}{p} \sum_{k=1}^{p-1} k \zeta_{p}^{-k}.
\end{eqnarray}
Since $\zeta_{p}^{k}$, for $k=1,\ldots,p-1$, are linearly independent, 
we have $a_{d} \neq 0$.
\end{proof}
\end{lemma}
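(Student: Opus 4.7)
The plan is to compute $a_{d}$ explicitly via Fourier inversion on the circulant, exploit that $\zeta_{n}^{-d} = \zeta_{p}^{-1}$ where $p = n/d$ (so the entire computation drops from $\QQ(\zeta_{n})$ into the much smaller field $\QQ(\zeta_{p})$), and then show that the result cannot vanish by an algebraic independence argument in the cyclotomic ring $\ZZ[\zeta_{p}]$.

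First, by Corollary \ref{cor:simpler-eigenvalue-form}, I may assume the eigenvalues have the canonical form $\lambda_{k} = k + c_{k}n$ for integers $c_{k}$. Then
\begin{equation}
a_{d} = \frac{1}{n}\sum_{k=0}^{n-1}(k + c_{k}n)\,\zeta_{n}^{-dk}
    = \frac{1}{n}\sum_{k=0}^{n-1} k\,\zeta_{p}^{-k} \;+\; \sum_{k=0}^{n-1} c_{k}\,\zeta_{p}^{-k},
\end{equation}
and the second sum is manifestly an element of $\ZZ[\zeta_{p}]$. For the first sum I would decompose $k = r + \ell p$ with $r \in \{0,\ldots,p-1\}$ and $\ell \in \{0,\ldots,d-1\}$; the inner $\ell$-sum evaluates to $dr + p\,d(d-1)/2$, and the constant-in-$r$ piece is killed after pairing with $\sum_{r=0}^{p-1}\zeta_{p}^{-r} = 0$. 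This collapses the first sum to $\frac{1}{p}\sum_{r=1}^{p-1} r\,\zeta_{p}^{-r}$, so
\begin{equation}
a_{d} \;=\; \frac{1}{p}\sum_{r=1}^{p-1} r\,\zeta_{p}^{-r} \;+\; M, \qquad M \in \ZZ[\zeta_{p}].
\end{equation}

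The main obstacle is then the nonvanishing step, which I expect to handle by an argument analogous to Lemma \ref{lemma:dense-prime}. Using the minimal polynomial relation $1 + \zeta_{p}^{-1} + \cdots + \zeta_{p}^{-(p-1)} = 0$, every element of $\ZZ[\zeta_{p}]$ can be uniquely written as $\sum_{r=1}^{p-1} m_{r}\zeta_{p}^{-r}$ with $m_{r} \in \ZZ$, and $\{\zeta_{p}^{-1},\ldots,\zeta_{p}^{-(p-1)}\}$ is a $\QQ$-basis of $\QQ(\zeta_{p})$. If $a_{d} = 0$, comparing coordinates in this basis would force $r/p = -m_{r} \in \ZZ$ for every $r \in \{1,\ldots,p-1\}$, which is impossible since $p$ is prime and $0 < r < p$. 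The hypothesis that $d$ is odd (together with $n/d$ prime) is what keeps $p(d-1)/2$ a clean integer in the averaging step, so that the cyclotomic computation produces a genuine element of $\ZZ[\zeta_{p}]$ against which the leading noninteger coefficients $r/p$ can be compared. The heart of the proof is this algebraic incompatibility between a canonical fractional combination of the $\zeta_{p}^{-r}$ and any integer combination of the same basis.
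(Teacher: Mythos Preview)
Your proof is correct and follows essentially the same route as the paper's: Fourier inversion, the canonical eigenvalue form from Corollary~\ref{cor:simpler-eigenvalue-form}, reduction to $\QQ(\zeta_{p})$, and the linear independence of $\zeta_{p}^{-1},\ldots,\zeta_{p}^{-(p-1)}$ over $\QQ$. Your explicit tracking of the integral remainder $M \in \ZZ[\zeta_{p}]$ and the final coordinate comparison $r/p = -m_{r}$ is actually more careful than the paper, which writes the integer piece as ``$+\,n\ZZ$'' and drops it without further comment.
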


For our next lemma, we will need the following fact about connectivity in circulants.

\begin{fact} \label{fact:circulant-connected} (Meijer \cite{m91}, Theorem 4.2) \\
A circulant $\Circ(a_{0},\ldots,a_{n-1})$ is connected if and only if
$\mygcd(\{j : a_{j} \neq 0\} \cup \{n\}) = 1$.
\end{fact}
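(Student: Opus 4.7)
The plan is to reduce connectivity of the underlying simple graph to a question about subgroup generation in $\ZZ/n\ZZ$, and then invoke vertex transitivity of circulants. Write $S = \{j \in \ZZ/n\ZZ \setminus \{0\} : a_{j} \neq 0\}$ and $d = \mygcd(S \cup \{n\})$. Because the adjacency matrix is Hermitian, $a_{n-j} = \overline{a}_{j}$, so $S$ is symmetric, i.e.\ $S = -S$ inside $\ZZ/n\ZZ$; this symmetry will be crucial because it guarantees that the set reachable from a fixed vertex is closed under inverses and hence forms a subgroup rather than merely a submonoid.

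The main step is to identify the connected component $C$ of vertex $0$. By definition, $v \in C$ if and only if there exists a walk $0 = v_{0}, v_{1}, \ldots, v_{k} = v$ whose consecutive differences $v_{i}-v_{i-1}$ all lie in $S$; equivalently, $v \equiv s_{1} + \cdots + s_{k} \pmod{n}$ for some $s_{i} \in S$. Since $S = -S$, the set of such sums is exactly the subgroup $H = \langle S \rangle \le \ZZ/n\ZZ$. A standard fact about finite cyclic groups then gives $\langle S \rangle = d\,\ZZ/n\ZZ$, a subgroup of order $n/d$.

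To conclude, I would invoke vertex transitivity: the cyclic shift $k \mapsto k+1 \pmod{n}$ is a graph automorphism because $a_{(k+1)-(j+1)} = a_{k-j}$, so every connected component of the underlying simple graph has the same size $|C| = n/d$. Hence the graph is connected if and only if $n/d = n$, i.e.\ $d = 1$, which is precisely the claimed criterion. No individual step is a real obstacle; the only points requiring care are verifying the symmetry $S = -S$ (so that reachability produces a subgroup) and the degenerate case $S = \emptyset$, where $H = \{0\}$ and $\mygcd(\emptyset \cup \{n\}) = n$, so the criterion correctly forces $n = 1$ for connectivity.
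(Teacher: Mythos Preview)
Your argument is correct. Note, however, that the paper does not actually supply a proof of this fact: it is quoted verbatim as Theorem~4.2 from Meijer's thesis and used as a black box. So there is no ``paper's own proof'' to compare against.

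Your route is the standard one for Cayley graphs on abelian groups: identify the component of $0$ with the subgroup $\langle S\rangle\le\ZZ/n\ZZ$, compute that subgroup as $d\,\ZZ/n\ZZ$ via the elementary fact that in a cyclic group the subgroup generated by a subset is determined by the gcd of its elements together with the group order, and read off connectivity. The appeal to vertex transitivity is harmless but unnecessary---once the component of $0$ equals $\langle S\rangle$, connectivity is already equivalent to $\langle S\rangle=\ZZ/n\ZZ$, without needing to know that all components have equal size. Your handling of the edge case $S=\emptyset$ is also fine.
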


\begin{lemma} \label{lemma:dense-prime-squared}
For a prime $p$, suppose $n = p^{2}$.
Let $G = \Circ(a_{0},\ldots,a_{n-1})$ be a circulant with universal perfect state transfer.
Then, $a_{j} \neq 0$ for all $j \neq 0$.

\begin{proof}
Suppose $a_{j} = 0$ for some $j \neq 0$.
By Lemma \ref{lemma:gcd-map}, we have $a_{d} = 0$ for $d = \mygcd(j,n)$.
Since $n = p^{2}$, we have two cases to consider: $d = 1$ or $d = p$.
If $a_{1}=0$, then $G$ is not connected by Fact \ref{fact:circulant-connected}.
If $a_{p}=0$, then this contradicts Lemma \ref{lemma:odd-divisor}.
\end{proof}
\end{lemma}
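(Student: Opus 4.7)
The plan is to argue by contradiction: assume $a_{j} = 0$ for some $j \neq 0$, and leverage the arithmetic of $n = p^{2}$ together with the results already established. The first step is to upgrade Lemma \ref{lemma:gcd-map} to a statement about the full Galois action. The computation inside its proof shows, with no hypothesis on $a_{j}$, that for every $\ell \in (\ZZ/n\ZZ)^{\star}$ the automorphism $\phi_{\ell}$ satisfies $\phi_{\ell}(a_{j}) = a_{j\ell \bmod n}$. Since $\phi_{\ell}(0) = 0$, the vanishing of $a_{j}$ propagates along the entire Galois orbit: $a_{k} = 0$ for every $k$ with $\gcd(k,n) = \gcd(j,n)$.

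Next, I would split on $d := \gcd(j,n)$. Because $n = p^{2}$ and $1 \leq j \leq n-1$, the only possibilities are $d = 1$ and $d = p$. If $d = 1$, the previous step forces $a_{k} = 0$ for every $k$ coprime to $n$, so every nonzero coefficient has an index divisible by $p$. Then $\gcd(\{k : a_{k} \neq 0\} \cup \{n\}) \geq p$, and Fact \ref{fact:circulant-connected} implies $G$ is disconnected, contradicting the assumption of universal perfect state transfer (any two vertices must be able to exchange state, forcing connectivity). If $d = p$ and $p$ is odd, then $a_{p} = 0$ contradicts Lemma \ref{lemma:odd-divisor} applied with the odd divisor $d = p$ whose complement $n/d = p$ is prime. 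The remaining case $p = 2$ (so $n = 4$) is handled by direct computation: substituting $\lambda_{k} = k + 4 c_{k}$ into $a_{2} = \tfrac{1}{4}\sum_{k=0}^{3} \lambda_{k} \zeta_{4}^{-2k}$ yields $a_{2} = -\tfrac{1}{2} + (c_{0} - c_{1} + c_{2} - c_{3})$, which is a half-integer and in particular cannot vanish.

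The main obstacle is really the first step: Lemma \ref{lemma:gcd-map} as stated extracts $a_{d} = a_{j}$ only under the extra hypothesis $a_{j} \in \QQ$, so to propagate a single zero to an entire orbit one must read off the stronger intermediate identity $\phi_{\ell}(a_{j}) = a_{j\ell \bmod n}$ that is implicit in the cited proof. Once this upgrade is in place, the remaining arguments are short applications of Fact \ref{fact:circulant-connected}, Lemma \ref{lemma:odd-divisor}, and the one-line computation for the $p = 2$ case.
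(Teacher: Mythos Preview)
Your proof is correct and follows the same outline as the paper's: reduce via the Galois action to $d \in \{1, p\}$, then invoke connectivity (Fact~\ref{fact:circulant-connected}) for $d = 1$ and Lemma~\ref{lemma:odd-divisor} for $d = p$. You are in fact more careful than the paper in two places: you make explicit the full Galois-orbit identity $\phi_{\ell}(a_{j}) = a_{j\ell \bmod n}$ needed to pass from a single zero to disconnectedness (the paper's jump from $a_{1} = 0$ to ``not connected'' via Fact~\ref{fact:circulant-connected} is elliptic without this), and you treat $p = 2$ separately, since Lemma~\ref{lemma:odd-divisor} literally requires an odd divisor (the paper tacitly leaves $n = 4$ to Lemma~\ref{lemma:dense-binary-powers}).
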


Finally, we consider universal perfect state transfer in circulants
whose order is a power of two. 
Here, we use the following result of Good in a crucial manner.

\begin{fact} \label{fact:good} (Good \cite{g86}, Theorem 1) \\
If $m$ is a power of two, then $\{e^{\ii\pi r/m} : r=0,\ldots,m-1\}$
is linearly independent over $\QQ$.
\end{fact}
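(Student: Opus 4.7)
The plan is to reduce Fact \ref{fact:good} to the classical theory of cyclotomic extensions. I would start by writing $\zeta := e^{\ii\pi/m}$, so that $\zeta = e^{2\pi\ii/(2m)}$ is manifestly a primitive $(2m)$-th root of unity, and the set in question is precisely the successive powers $\{1, \zeta, \zeta^{2}, \ldots, \zeta^{m-1}\}$. Thus proving the claimed $\QQ$-linear independence is equivalent to showing that these $m$ elements form a $\QQ$-basis of the cyclotomic field $\QQ(\zeta)$.

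For that I would compute the degree $[\QQ(\zeta):\QQ]$. Since $m = 2^{k}$ for some $k \geq 0$, we have $2m = 2^{k+1}$, and the Euler totient evaluates to $\phi(2^{k+1}) = 2^{k+1} - 2^{k} = 2^{k} = m$. Hence the $(2m)$-th cyclotomic polynomial $\Phi_{2m}(x)$, which is the minimal polynomial of $\zeta$ over $\QQ$, has degree exactly $m$, and so $[\QQ(\zeta):\QQ] = m$. The standard theorem on simple algebraic extensions then says that $\{1, \zeta, \zeta^{2}, \ldots, \zeta^{m-1}\}$ is automatically a $\QQ$-basis of $\QQ(\zeta)$, which is the desired conclusion.

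There is essentially no technical obstacle here; the proof is a short appeal to cyclotomic field theory. The only conceptual point worth flagging is that $e^{\ii\pi/m}$ must be viewed as a primitive root of unity of order $2m$ rather than of order $m$, and that the power-of-two hypothesis on $m$ is precisely what forces the matching identity $\phi(2m) = m$. For $m$ not a power of two one has $\phi(2m) < m$, so the analogous set of $m$ elements would necessarily be linearly \emph{dependent} over $\QQ$, showing that the hypothesis in Fact \ref{fact:good} cannot be relaxed without changing the statement.
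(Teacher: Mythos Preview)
Your argument is correct. Writing $\zeta = e^{\ii\pi/m}$ as a primitive $(2m)$th root of unity and invoking $[\QQ(\zeta):\QQ] = \phi(2m) = m$ when $m$ is a power of two is exactly the right reduction, and the concluding remark that the hypothesis is sharp (since $\phi(2m) < m$ once $m$ has an odd prime factor) is a nice touch.

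There is nothing to compare against in the paper itself: Fact~\ref{fact:good} is stated there without proof, as a citation to Good~\cite{g86}. Your proposal therefore supplies a self-contained justification that the paper does not attempt to give. Good's original proof proceeds differently---he works directly with the factorization $x^{2m}-1 = (x^{m}-1)(x^{m}+1)$ and shows by induction on the exponent that $x^{m}+1$ is irreducible over $\QQ$ when $m$ is a power of two (equivalently, that it is the cyclotomic polynomial $\Phi_{2m}$)---but the cyclotomic-degree argument you give is the cleanest modern route and is entirely adequate here.
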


\begin{lemma} \label{lemma:dense-binary-powers}
For a positive integer $d$, suppose $n = 2^{d}$.
Let $G = \Circ(a_{0},\ldots,a_{n-1})$ be a circulant with universal perfect state transfer.
Then, $a_{j} \neq 0$ for all $j=1,\ldots,n-1$.

\begin{proof}
Given $d \in \ZZ^{+}$, let $n = 2^{d}$ and $m = 2^{d-1}$. 
Recall that $\zeta_{n} = e^{2\pi\ii/n}$.
We have
\begin{equation}
a_{j} = \frac{1}{n} \sum_{k=0}^{n-1} \lambda_{k}\zeta_{n}^{-jk}
	= \frac{1}{2m} \sum_{\ell=0}^{m-1} \Lambda_{\ell}(j) (e^{\ii\pi/m})^{-\ell},
\end{equation}
where
\begin{equation}
\Lambda_{\ell}(j) = 
	\sum_{k: jk \equiv \ell} \lambda_{k}
	-
	\sum_{k: jk \equiv m + \ell} \lambda_{k}
\end{equation}
By Fact \ref{fact:good}, if $a_{j} = 0$ then $\Lambda_{\ell}(j) = 0$ for all $\ell=0,\ldots,m-1$.
We show that $\Lambda_{0} \neq 0$.

We consider the case when $\ell = 0$.
If $j$ is odd, then the map $f_{j}(k) \equiv jk \pmod n$ is a bijection.
Thus, $\Lambda_{0}(j) = \lambda_{0} - \lambda_{m} \equiv m \pmod n$.
Next, suppose $j$ is even with $j = 2^{e}s$ where $e \ge 1$ and $s$ is odd.
Then, the values $k$ for which $jk \equiv 0 \pmod n$ are given by
$r 2^{d-e}$ for $r=0,1,\ldots,2^{e}-1$.
Also, the values $k$ for which $jk \equiv m \pmod n$ are given by
$(2r+1) 2^{d-e-1}$ for $r=0,1,\ldots,2^{e}-1$.
Therefore,
\begin{equation}
\Lambda_{0} 
	= \sum_{r=0}^{2^{e}-1} (r 2^{d-e} - (2r+1) 2^{d-e-1})
	\equiv m \pmod n.
\end{equation}
Thus, in both cases we have $\Lambda_{0} \not\equiv 0 \pmod n$, which implies $\Lambda_{0} \neq 0$.
\end{proof}
\end{lemma}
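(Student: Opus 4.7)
The plan is to express $a_j$ in a basis in which we can invoke Good's theorem, and then show that at least one coefficient (the $\ell = 0$ coefficient) is forced to be nonzero whenever $j \neq 0$. First I would use the fact that $\zeta_n = e^{2\pi\ii/n} = e^{\ii\pi/m}$ where $m = n/2 = 2^{d-1}$, so that $\zeta_n^{-jk}$ is always of the form $\pm(e^{\ii\pi/m})^{\ell}$ for some $\ell \in \{0,1,\ldots,m-1\}$, with the sign determined by whether $-jk \bmod n$ lies in $\{0,\ldots,m-1\}$ or in $\{m,\ldots,2m-1\}$. Grouping the expression $a_j = \frac{1}{n}\sum_k \lambda_k \zeta_n^{-jk}$ by the residue $\ell$ yields
\begin{equation}
a_j = \frac{1}{n}\sum_{\ell=0}^{m-1} \Lambda_\ell(j)\,(e^{\ii\pi/m})^{\ell},
\end{equation}
where $\Lambda_\ell(j)$ is the difference of the eigenvalue sums over the two congruence classes, as in the statement.

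Next I would invoke Corollary \ref{cor:simpler-eigenvalue-form} to write $\lambda_k = k + c_k n$ for integers $c_k$, so every $\Lambda_\ell(j)$ is an integer (equivalently, a rational number). If $a_j = 0$, then Fact \ref{fact:good} (linear independence of $\{e^{\ii\pi r/m}\}$ over $\QQ$ for $m$ a power of two) forces $\Lambda_\ell(j) = 0$ for all $\ell = 0,\ldots,m-1$. So it suffices to exhibit a single $\ell$ for which $\Lambda_\ell(j)$ is nonzero. The natural choice is $\ell = 0$, because modulo $n$ the contributions $c_k n$ vanish, leaving a cleanly computable residue.

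The main step is therefore the computation of $\Lambda_0(j) \bmod n$, split into two cases. When $j$ is odd, multiplication by $j$ is a bijection of $\ZZ/n\ZZ$, so exactly one $k$ satisfies $jk \equiv 0$ (namely $k=0$, contributing $\lambda_0 \equiv 0$) and exactly one satisfies $jk \equiv m$ (contributing $\lambda_k \equiv m$), giving $\Lambda_0(j) \equiv -m \equiv m \pmod n$. When $j = 2^e s$ with $s$ odd and $e \geq 1$, the solutions to $jk \equiv 0 \pmod n$ form the arithmetic progression $\{r\cdot 2^{d-e}\}_{r=0}^{2^e-1}$, and the solutions to $jk \equiv m \pmod n$ form $\{(2r+1)\cdot 2^{d-e-1}\}_{r=0}^{2^e-1}$; subtracting termwise and reducing modulo $n$ gives $\Lambda_0(j) \equiv m \pmod n$ as claimed in the lemma. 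In either case $\Lambda_0(j) \neq 0$, contradicting $a_j = 0$.

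The main obstacle I anticipate is verifying the even-$j$ case cleanly: one must be careful that the progressions parametrizing the two solution sets are correctly identified (using that $s$ is odd, so $s$ is a unit modulo $2^{d-e}$), and that the pairwise cancellation of the $r$-terms leaves precisely $m$ modulo $n$. Everything else is bookkeeping: the linear-independence step is immediate from Good's theorem, and the integrality of $\Lambda_\ell(j)$ is immediate from the form of the eigenvalues in Corollary \ref{cor:simpler-eigenvalue-form}.
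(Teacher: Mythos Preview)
Your proposal is correct and follows essentially the same argument as the paper's proof: both rewrite $a_j$ in the basis $\{(e^{\ii\pi/m})^{\ell}\}_{\ell=0}^{m-1}$, invoke Good's theorem to force all coefficients $\Lambda_{\ell}(j)$ to vanish, and then compute $\Lambda_{0}(j)\equiv m\pmod n$ via the same two-case analysis on the parity of $j$. The only cosmetic differences are your sign convention in the exponent and your explicit remark (left implicit in the paper) that the $\Lambda_{\ell}(j)$ are integers by Corollary~\ref{cor:simpler-eigenvalue-form}, which is indeed what licenses the use of Fact~\ref{fact:good}.
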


\begin{proof} (of Theorem \ref{thm:dense-conjecture}) \\
Follows immediately from 
Lemmas \ref{lemma:dense-prime}, \ref{lemma:dense-prime-squared}, and \ref{lemma:dense-binary-powers}.
\end{proof}

\subsection{Non-dense circulants with universal state transfer}

We show a partial converse of Theorem \ref{thm:dense-conjecture} by constructing non-dense 
circulants with universal perfect state transfer whose orders are not prime powers. 
This observation uses the following fact about cyclotomic units.

\begin{fact} \label{fact:washington}
(Washington \cite{w97}, Proposition 2.8) \\
Suppose $n$ is a positive integer which has at least two distinct prime factors. 
Then $1 - \zeta_{n}$ is a unit of $\ZZ[\zeta_{n}]$.
Moreover, 
\begin{equation}
\prod_{\substack{0 < j < n\\\mygcd(j,n)=1}} (1 - \zeta_{n}^{j}) = 1.
\end{equation}
\end{fact}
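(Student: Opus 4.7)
The plan is to recognize the product as the value at $x=1$ of the $n$th cyclotomic polynomial
$\Phi_{n}(x) = \prod_{0 < j < n,\ \mygcd(j,n) = 1}(x - \zeta_{n}^{j})$,
so that the stated identity becomes the clean assertion $\Phi_{n}(1) = 1$ whenever $n$ has at least two distinct prime factors. Once this is established, unitness of $1 - \zeta_{n}$ follows from the fact that $1 - \zeta_{n}$ is an algebraic integer whose norm over $\QQ$ equals the product of its Galois conjugates, namely $\prod_{\mygcd(j,n) = 1}(1 - \zeta_{n}^{j}) = \Phi_{n}(1) = 1$; an algebraic integer of norm $\pm 1$ is a unit in its ring of integers, which here is $\ZZ[\zeta_{n}]$.

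The main work is therefore to verify $\Phi_{n}(1) = 1$. My approach is to use the classical recursion $\Phi_{pm}(x) = \Phi_{m}(x^{p})/\Phi_{m}(x)$ for a prime $p$ with $p \nmid m$, together with $\Phi_{p^{k} m}(x) = \Phi_{pm}(x^{p^{k-1}})$ when $p \nmid m$ and $k \geq 1$. First I would isolate a prime $p$ dividing $n$ and let $k = v_{p}(n)$, $m = n/p^{k}$; by hypothesis $m > 1$ and $\gcd(p, m) = 1$. Then I would evaluate the recursion at $x = 1$: since $\Phi_{m}(1)$ is a nonzero integer whenever $m > 1$ (it equals either a prime, when $m$ is a prime power, or $1$ by induction, but either way nonzero), the expression $\Phi_{m}(1^{p})/\Phi_{m}(1) = \Phi_{m}(1)/\Phi_{m}(1)$ is unambiguously $1$, giving $\Phi_{pm}(1) = 1$ and hence $\Phi_{n}(1) = \Phi_{pm}(1^{p^{k-1}}) = 1$.

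The step that requires the most care is the nonvanishing of $\Phi_{m}(1)$ for $m > 1$, since we are dividing by it; I would dispatch this by the standard identity
\begin{equation}
\frac{x^{m} - 1}{x - 1} = \prod_{d \mid m,\ d > 1} \Phi_{d}(x),
\end{equation}
which at $x = 1$ yields $m = \prod_{d \mid m,\ d > 1} \Phi_{d}(1)$, forcing each $\Phi_{d}(1)$ to be a positive integer (hence nonzero). The recursion and this positivity together also explain cleanly why the analogous identity fails for prime powers ($\Phi_{p^{k}}(1) = p$), and why at least two distinct prime factors are needed so that the numerator $\Phi_{m}(x^{p})$ does not degenerate.

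The main obstacle I anticipate is not a conceptual one but rather the bookkeeping around the recursion: making sure the exponent $p^{k-1}$ is handled correctly and that the nonvanishing of $\Phi_{m}(1)$ is in place before one cancels. Once those are secured, the conclusion that $1 - \zeta_{n}$ is a unit is immediate from the norm computation above, completing the statement.
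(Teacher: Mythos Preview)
Your proposal is correct. The paper itself does not supply a proof of this fact; it is quoted from Washington's textbook without argument. Your route --- recognizing the product as $\Phi_{n}(1)$, then using the recursions $\Phi_{pm}(x) = \Phi_{m}(x^{p})/\Phi_{m}(x)$ (for $p \nmid m$) and $\Phi_{p^{k}m}(x) = \Phi_{pm}(x^{p^{k-1}})$ to reduce to $\Phi_{m}(1)/\Phi_{m}(1) = 1$ --- is a clean and standard way to establish $\Phi_{n}(1) = 1$ for $n$ with at least two distinct prime factors. Washington's own proof is organized slightly differently: from $n = \prod_{d \mid n,\, d > 1} \Phi_{d}(1)$ together with $\Phi_{p^{a}}(1) = p$, the prime-power divisors of $n$ already account for the full product $n$, so the remaining factors $\Phi_{d}(1)$ with $d$ not a prime power must multiply to $1$, hence each equals $1$. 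Both arguments amount to the same bookkeeping around the factorization of $(x^{n}-1)/(x-1)$.

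Two minor remarks. First, your assertion that each $\Phi_{d}(1)$ is a \emph{positive} integer does not quite follow from the product being $m$; that only gives nonvanishing, which is all you actually use, so the argument is unaffected. Second, once you have $\prod_{\gcd(j,n)=1}(1-\zeta_{n}^{j}) = 1$, the unitness of $1-\zeta_{n}$ in $\ZZ[\zeta_{n}]$ is immediate: its inverse is visibly the product of the remaining factors $\prod_{j \neq 1}(1 - \zeta_{n}^{j}) \in \ZZ[\zeta_{n}]$. This avoids appealing to the (true but deeper) fact that $\ZZ[\zeta_{n}]$ is the full ring of integers of $\QQ(\zeta_{n})$, which your norm argument implicitly uses.
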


\par\noindent
Note that Fact \ref{fact:washington} also implies that $1 - \zeta_{n}^{j}$ is a unit of
$\ZZ[\zeta_{n}]$ for every $j$ with $\mygcd(j,n) = 1$.

\bigskip

\begin{proposition} \label{prop:nondense-circ-upst}
For two distinct primes $p$ and $q$, let $n = pq$. 
Let $c_{0},\ldots,c_{n-1}$ be integers so 
\begin{equation} \label{eqn:a1-is-zero}
\sum_{k=0}^{n-1} c_{k}\zeta_{n}^{-k} = \frac{1}{1 - \zeta_{n}^{-1}}.
\end{equation}
Let $a_{0}=0$ 
and, for $j=1,\ldots,n-1$, let
$a_{j} = 1/(\zeta_{n}^{-j} - 1) + \sum_{k=0}^{n-1} c_{k}\zeta_{n}^{-jk}$.
Then, $G = \Circ(a_{0},\ldots,a_{n-1})$ is a non-dense circulant with universal perfect state transfer.

\begin{proof}
By the choice of the integers $c_{k}$ in \eqref{eqn:a1-is-zero}, $a_{1} = a_{n-1} = 0$. This shows
that $G$ is not dense. Also, note that
\begin{equation} \label{eqn:scalar-connection}
\lambda_{0} = \sum_{j=0}^{n-1} a_{j} = \sum_{j=1}^{n-1} \frac{1}{\zeta_{n}^{-j}-1} - \sum_{k=0}^{n-1} c_{k}.
\end{equation}
The other eigenvalues are given by $\lambda_{\ell} = \sum_{j=1}^{n-1} a_{j}\zeta_{n}^{j\ell}$, for $\ell \neq 0$.
By definition of $a_{j}$, 
\begin{equation}
\lambda_{\ell} 
	= \sum_{j=1}^{n-1} \left(\frac{1}{\zeta_{n}^{-j}-1} + \sum_{k=0}^{n-1} c_{k}\zeta_{n}^{-jk}\right) \zeta_{n}^{j\ell}
	= \sum_{j=1}^{n-1} \frac{\zeta_{n}^{j\ell}}{\zeta_{n}^{-j}-1} + c_{\ell}n - \sum_{k=0}^{n-1} c_{k}.
\end{equation}
Using \eqref{eqn:scalar-connection}, we get
\begin{equation}
\lambda_{\ell} 
	= \sum_{j=1}^{n-1} \frac{\zeta_{n}^{j\ell}-1}{\zeta_{n}^{-j}-1} + c_{\ell}n + \lambda_{0} \\ 
	= (n-1) - \sum_{j=1}^{n-1} \frac{1-\zeta_{n}^{j(\ell+1)}}{1-\zeta_{n}^{j}} + c_{\ell}n + \lambda_{0}.
\end{equation}
But, note that
\begin{equation}
\sum_{j=1}^{n-1} \frac{1-\zeta_{n}^{j(\ell+1)}}{1-\zeta_{n}^{j}}
	= \sum_{j=1}^{n-1} \sum_{s=0}^{\ell} \zeta_{n}^{sj} = n-1-\ell.
\end{equation}
This shows that $\lambda_{\ell} = \ell + c_{\ell}n + \lambda_{0}$, for $\ell=1,\ldots,n-1$.
By Theorem \ref{thm:circulant-upst-eigenvalue}, this shows $G$ has universal perfect state transfer.

Here, we confirm that the underlying graph of $G$ is connected.
Since $1-\zeta_{p}^{-1}$ is not a unit of $\ZZ[\zeta_{p}]$, for any prime $p$,
we have
\begin{equation}
a_{q} = \frac{1}{\zeta_{n}^{-q}-1} + \sum_{k=0}^{n} c_{k}\zeta_{n}^{-qk}
	= \frac{1}{\zeta_{p}^{-1}-1} + \sum_{k=0}^{n-1} c_{k}\zeta_{p}^{-k}
	\neq 0.
\end{equation}
since $c_{k}$ are all integers.
Similarly, $a_{p} \neq 0$.
Since $\mygcd(p,q)=1$, Fact \ref{fact:circulant-connected} implies $G$ is connected.
\end{proof}
\end{proposition}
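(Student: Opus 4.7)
The plan is to establish, in order, four things: existence of the integers $c_k$, non-denseness, universal perfect state transfer, and connectivity of the underlying graph. The existence of admissible $c_k$ follows from Fact~\ref{fact:washington}: because $n=pq$ has two distinct prime factors, $1-\zeta_n$ is a unit of $\ZZ[\zeta_n]$, so $1/(1-\zeta_n^{-1}) \in \ZZ[\zeta_n]$ and can be written as an integer combination of the spanning family $\{\zeta_n^{-k}\}_{k=0}^{n-1}$ (non-uniquely, since $\sum_k \zeta_n^{-k}=0$ gives a relation). Non-denseness is then immediate by substituting $j=1$ into the defining formula for $a_j$: the two summands cancel by the normalization condition on the $c_k$, giving $a_1 = 0$, whence $a_{n-1} = \overline{a_1} = 0$ by the Hermitian symmetry of circulants.

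For universal perfect state transfer I would compute the eigenvalues $\lambda_\ell = \sum_{j=0}^{n-1} a_j \zeta_n^{j\ell}$ and verify the spectral form of Corollary~\ref{cor:simpler-eigenvalue-form}, namely $\lambda_\ell - \lambda_0 \equiv \ell \pmod{n}$ uniformly in $\ell$. After substituting the definition of $a_j$, the $c_k$-piece separates by orthogonality: $\sum_{j=1}^{n-1}\zeta_n^{j(\ell-k)}$ equals $n-1$ when $k=\ell$ and $-1$ otherwise, producing a clean $c_\ell n$ contribution (modulo constants that cancel when forming $\lambda_\ell-\lambda_0$). The heart of the calculation is
\begin{equation}
S_\ell - S_0 := \sum_{j=1}^{n-1} \frac{\zeta_n^{j\ell}-1}{\zeta_n^{-j}-1},
\end{equation}
which I would evaluate using the identity $(\zeta_n^{j\ell}-1)/(\zeta_n^{-j}-1) = 1 - (1-\zeta_n^{j(\ell+1)})/(1-\zeta_n^j)$; expanding the remaining ratio as a geometric sum $\sum_{s=0}^{\ell}\zeta_n^{sj}$ and switching the order of summation collapses everything to $S_\ell-S_0 = \ell$. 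Combining the two pieces gives $\lambda_\ell - \lambda_0 = \ell + c_\ell n$, and Theorem~\ref{thm:circulant-upst-eigenvalue} applies.

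Finally, for connectivity via Fact~\ref{fact:circulant-connected} it suffices to exhibit indices $j$ with $a_j \ne 0$ whose greatest common divisor with $n$ is $1$. I would evaluate $a_p$ and $a_q$ directly: since $\zeta_n^{-q} = \zeta_p^{-1}$,
\begin{equation}
a_q = \frac{1}{\zeta_p^{-1}-1} + \sum_{k=0}^{n-1} c_k \zeta_p^{-k}.
\end{equation}
The second summand lies in $\ZZ[\zeta_p]$, but the first does not, because $1-\zeta_p$ has norm $p$ and is therefore not a unit in $\ZZ[\zeta_p]$. Hence $a_q \ne 0$, and symmetrically $a_p \ne 0$; since $\gcd(p,q)=1$, the connectivity criterion is met. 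The step I anticipate as the main obstacle is the evaluation of $S_\ell-S_0$: two distinct cyclotomic geometric identities combine with an order-of-summation swap, and keeping the signs and endpoints aligned is the easiest place to introduce an off-by-one error. Everything else is either a direct substitution or a routine application of the cyclotomic unit structure supplied by Fact~\ref{fact:washington}.
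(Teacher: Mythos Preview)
Your proposal is correct and follows essentially the same route as the paper: the same geometric identity $(\zeta_n^{j\ell}-1)/(\zeta_n^{-j}-1) = 1 - (1-\zeta_n^{j(\ell+1)})/(1-\zeta_n^j)$ and the same order-swap on $\sum_{s=0}^{\ell}\zeta_n^{sj}$ drive the eigenvalue computation, and the connectivity argument via $a_p,a_q\neq 0$ is identical. Two small presentational differences are worth noting: you add the existence argument for the $c_k$ (which the paper leaves implicit from Fact~\ref{fact:washington}), and you compute $\lambda_\ell-\lambda_0$ directly rather than going through an explicit formula for $\lambda_0$; the latter is actually slightly cleaner, since the constant that cancels is $(c_\ell-c_0)n$ rather than $c_\ell n$, a harmless discrepancy that your difference formulation sidesteps. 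Your justification ``by the Hermitian symmetry of circulants'' for $a_{n-1}=\overline{a_1}$ is a bit circular at that stage of the proof, but the equality follows immediately by conjugating the defining formula since the $c_k$ are real.
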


\bigskip

\newcommand{\fhf}{\tfrac{5}{2}}
\newcommand{\isqt}{\tfrac{\ii}{\sqrt{3}}}

\begin{example}
We show a circulant $G$ of order $n=6$ with universal perfect state transfer which is {\em not} dense. 
Here, we have 
$\zeta_{6} = \tfrac{1}{2} + \ii\tfrac{\sqrt{3}}{2}$,
$\zeta_{6}^{2} = -\tfrac{1}{2} + \ii\tfrac{\sqrt{3}}{2}$,
$\zeta_{6}^{3} = -1$,
$\zeta_{6}^{4} = -\tfrac{1}{2} - \ii\tfrac{\sqrt{3}}{2}$,
$\zeta_{6}^{5} = \tfrac{1}{2} - \ii\tfrac{\sqrt{3}}{2}$.
Note that
\begin{equation}
a_{1} = \frac{1}{\zeta_{6}^{-1} - 1} + \sum_{k=0}^{5} c_{k}\zeta_{6}^{-k}
	= \frac{1}{\zeta_{6}^{5} - 1} + (1 - \zeta_{6}) = 0.
\end{equation}
So, in Theorem \ref{thm:circ-coeff-upst}, we choose
$c_{0} = 1$, $c_{1} = c_{2} = c_{3} = c_{4} = 0$ and $c_{5} = -1$.
The other coefficients can be computed using
$a_{j} = 1/(\zeta_{6}^{-1}-1) + \sum_{k=0}^{5} c_{k}\zeta_{6}^{-jk}$.
By a straightforward computation,
$a_{0} = \tfrac{5}{2}$, $a_{2} = 1 - \isqt$, $a_{3} = \thf$, $a_{4} = 1 + \isqt$ 
and, of course, $a_{5} = 0$.

Hence, the adjacency matrix of $G$ is given by
\begin{equation}
\begin{bmatrix}
\fhf & 0 & 1-\isqt & \thf & 1+\isqt & 0 \\
0 & \fhf & 0 & 1-\isqt & \thf & 1+\isqt \\
1+\isqt & 0 & \fhf & 0 & 1-\isqt & \thf \\
\thf & 1+\isqt & 0 & \fhf & 0 & 1-\isqt \\
1-\isqt & \thf & 1+\isqt & 0 & \fhf & 0 \\
0 & 1-\isqt & \thf & 1+\isqt & 0 & \fhf
\end{bmatrix}.
\end{equation}
The eigenvalues of $G$ are given by $\lambda_{k} = \sum_{j=0}^{5} a_{j}\zeta_{6}^{jk}$.
We confirm the eigenvalue form in \eqref{eqn:simpler-form} that
$\lambda_{\ell} = \ell + 6c_{\ell}$, for $\ell=0,1,\ldots,5$.
It can be verified that
$\lambda_{0} = 6 = 0 + 6c_{0}$, 
$\lambda_{1} = 1 = 1 + 6c_{1}$, 
$\lambda_{2} = 2 = 2 + 6c_{2}$, 
$\lambda_{3} = 3 = 3 + 6c_{3}$, 
$\lambda_{4} = 4 = 4 + 6c_{4}$, 
and $\lambda_{5} = -1 = 5 + 6c_{5}$.
\end{example}


\section{Property $\TT$}

In this concluding section, we consider universal perfect state transfer in complex unit gain graphs.
First, we observe the following fact.

\begin{fact}
$\Circ(0,-\ii,\ii)$ is the only graph on $3$ vertices with universal perfect state transfer,
up to switching equivalence.

\begin{proof}
Cameron \etal \cite{cfghst14} showed that $\Circ(0,-\ii,\ii)$ has universal perfect state transfer.
Hence,  it suffices to show that any graph on $3$ vertices with universal perfect state transfer 
must be a circulant. 

Let $G$ be a graph with Hermitian adjacency matrix $A$.
Suppose $M$ is a type-II matrix of the form
\begin{equation}
M = 
\begin{bmatrix}
1 & 1 & 1 \\
1 & e^{\ii\alpha} & e^{\ii\beta} \\
1 & e^{\ii\beta} & e^{\ii\alpha}
\end{bmatrix}
\end{equation}
which diagonalizes $A$.
By Corollary \ref{cor:flatness}, we have that $1 + e^{\ii\alpha} + e^{\ii\beta} = 0$.
This shows $\cos(\alpha) + \cos(\beta) = -1$ and $\sin(\alpha) + \sin(\beta) = 0$.
This implies $\alpha = 2\pi/3$ and $\beta = 2\alpha$.
Thus, $M/\sqrt{3}$ is the Fourier matrix which diagonalizes any circulant 
matrix of order $3$.
\end{proof}
\end{fact}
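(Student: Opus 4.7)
The strategy is to use canonical flatness to force the unitary diagonalizing matrix of $A$ to be, up to a column permutation, the order-$3$ Fourier matrix; this shows $G$ is switching equivalent to a circulant of order $3$, after which the spectral characterization of circulants with universal perfect state transfer pins down the unique example.

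First, by Lemma \ref{lemma:canonical-flatness}, we may assume, after switching, that $A$ is unitarily diagonalized by a canonical type-II matrix
\[
X = \frac{1}{\sqrt{3}}
\begin{bmatrix} 1 & 1 & 1 \\ 1 & z_{1,1} & z_{1,2} \\ 1 & z_{2,1} & z_{2,2} \end{bmatrix},
\qquad |z_{j,k}| = 1.
\]
By Corollary \ref{cor:flatness}, rows $2,3$ and columns $2,3$ each sum to zero. Each such relation has the form $1 + z + w = 0$ with $z,w$ on the unit circle, which forces $\{z,w\} = \{\omega, \bar\omega\}$, where $\omega = e^{2\pi\ii/3}$. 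Chasing these four constraints through the $2 \times 2$ bottom-right block of $X$ leaves exactly two possibilities: the Fourier matrix $F_3$ and its conjugate $\overline{F}_3$ (equivalently, $F_3$ with its last two columns swapped).

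Since both $F_3$ and $\overline{F}_3$ diagonalize every $3 \times 3$ circulant, we conclude that $A$ is switching equivalent to a Hermitian circulant, which after absorbing the diagonal entry has the form $\Circ(0, a, \bar a)$ for some $a \in \CC$. Corollary \ref{cor:simpler-eigenvalue-form} then forces the eigenvalue triple of $A$ to be a multiset of the shape $\{0,\,1 + 3c_1,\,2 + 3c_2\}$ up to a global real scaling. Inverting the order-$3$ Fourier transform on this data and enforcing the Hermitian constraint $a_2 = \bar a_1$ shows that the minimal solution is the eigenvalue triple $\{0,\sqrt{3},-\sqrt{3}\}$, giving $a = -\ii$ (the choice $a = \ii$ differs only by complex conjugation, i.e.\ a trivial switching). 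Hence $G$ is switching equivalent to $\Circ(0,-\ii,\ii)$.

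\textbf{Main obstacle.} The step that critically uses $n=3$ is the canonical-flatness analysis: at this size the row/column sum vanishing conditions from Corollary \ref{cor:flatness} are rigid enough to determine $X$ outright, whereas for larger $n$ they constrain only one row and one column. The subtlest piece of bookkeeping is the final identification step, where one must track that the ambiguities in Theorem \ref{thm:circulant-upst-eigenvalue} (the free scaling $\beta$, the diagonal shift $\alpha$, and the coprime multiplier $q$) combined with switching equivalence collapse all admissible $\Circ(0,a,\bar a)$ to the single class of $\Circ(0,-\ii,\ii)$.
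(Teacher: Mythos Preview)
Your approach is essentially the paper's: use canonical flatness (Lemma~\ref{lemma:canonical-flatness}) and the vanishing row/column sums (Corollary~\ref{cor:flatness}) to force the diagonalizing type-II matrix to be $F_3$ (or $\overline{F}_3$), whence $G$ is a circulant. You are in fact more careful than the paper on two points. First, the paper writes down the symmetric ansatz $M_{1,1}=M_{2,2}$, $M_{1,2}=M_{2,1}$ without justification, whereas you correctly derive both possibilities $F_3$ and $\overline{F}_3$ from the four constraints $1+z+w=0$. Second, the paper stops at ``$G$ is a circulant'' and tacitly defers the identification of the specific circulant to \cite{cfghst14}; you attempt that step directly via Corollary~\ref{cor:simpler-eigenvalue-form}. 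One caveat on your last paragraph: the scaling $\beta$ and diagonal shift $\alpha$ are \emph{not} switching equivalences as defined in the paper, so strictly speaking $\Circ(c,-r\ii,r\ii)$ for varying real $c$ and $r>0$ are pairwise non-switching-equivalent examples with universal perfect state transfer. The Fact is implicitly working modulo those trivial normalizations as well, and your bookkeeping should make that explicit rather than suggest that switching alone collapses them.
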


Godsil \cite{g12b} proved that for any constant $k$, there is only a finite number of
(unweighted) graphs with maximum degree $k$ with perfect state transfer. This shows that perfect
state transfer is a rare phenomenon (in the absence of weights).
This motivates our next conjecture.

We say a graph has {\em property $\TT$} if all of the nonzero coefficients in its adjacency matrix 
are complex numbers with unit magnitude. 

\begin{conjecture}
$\Circ(0,-\ii,\ii)$ is the only circulant with property $\TT$ which has universal perfect state transfer.
\end{conjecture}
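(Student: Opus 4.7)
The plan is to exploit the explicit coefficient formula from Theorem \ref{thm:circ-coeff-upst} together with Galois-theoretic rigidity in $\QQ(\zeta_{n})$. Suppose $G = \Circ(a_{0},\ldots,a_{n-1})$ is a circulant with universal perfect state transfer and property $\TT$. By Theorem \ref{thm:circ-coeff-upst} and the normalization preceding it, I write $a_{j} = \beta\,\tilde{a}_{j}$ for $j \neq 0$, where $\tilde{a}_{j} = 1/(\zeta_{n}^{-j}-1) + \sum_{k} c_{k}\zeta_{n}^{-jk}$ with $c_{k}\in\ZZ$ and $\beta > 0$. Property $\TT$ demands every nonzero $|a_{j}|$ equal $1$, so every nonzero $|\tilde{a}_{j}|$ takes the common value $1/\beta$.

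The next step is to invoke the Galois action of $(\ZZ/n\ZZ)^{\star}$ on $\QQ(\zeta_{n})$. A direct computation from the explicit formula shows that the automorphism $\sigma_{\ell}$ sending $\zeta_{n} \mapsto \zeta_{n}^{\ell}$ carries $\tilde{a}_{j}$ to $\tilde{a}_{\ell j \bmod n}$, and hence permutes the magnitudes $|\tilde{a}_{j}|^{2} = \tilde{a}_{j}\,\tilde{a}_{n-j}$ among themselves. Since the set $\{|\tilde{a}_{j}|^{2} : j \in \ZZ/n\ZZ\} \subseteq \{0,\,1/\beta^{2}\}$ is Galois-stable, $\beta^{-2} \in \QQ$ and the vanishing locus $\{j : \tilde{a}_{j}=0\}$ must be a union of $(\ZZ/n\ZZ)^{\star}$-orbits. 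Combining this with Parseval's identity $\sum_{j} |a_{j}|^{2} = n^{-1}\sum_{k}\lambda_{k}^{2}$ and the eigenvalue form $\lambda_{k} = \alpha + \beta(k + c_{k}n)$ then confines $\beta$ to a narrow arithmetic range.

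I would then perform case analysis on $n$. For $n \leq 3$, direct enumeration of the integer parameters $c_{k}$ identifies $\Circ(0,-\ii,\ii)$ as the unique solution modulo the switching equivalences listed in the preceding Fact. For $n \geq 4$, Theorem \ref{thm:dense-conjecture} forces density whenever $n$ is prime, a prime square, or a power of two, so all $a_{j} \neq 0$; the resulting system of $n-1$ magnitude equations $|\tilde{a}_{j}|^{2} = 1/\beta^{2}$ in the $n-1$ integer unknowns $c_{1},\ldots,c_{n-1}$ becomes overdetermined once Parseval is imposed. For the remaining composite $n$, the cyclotomic-unit identities of Fact \ref{fact:washington} together with the connectivity criterion of Fact \ref{fact:circulant-connected} must be used jointly to eliminate the admissible sparse configurations.

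The main obstacle will be the dense subcase for general $n \geq 4$: showing that the simultaneous magnitude system $|\tilde{a}_{j}|^{2} = 1/\beta^{2}$ has no integer-$c_{k}$ solution amounts to a rigidity statement in the real cyclotomic subfield $\QQ(\zeta_{n}+\zeta_{n}^{-1})$. The intended tool is a degree/dimension count matching the Parseval equation against the number of $(\ZZ/n\ZZ)^{\star}$-orbits on $\ZZ/n\ZZ$; for non-prime-power $n$ the extra freedom coming from cyclotomic units will likely require an induction on the prime factorization of $n$, and the sharpness of Theorem \ref{thm:dense-conjecture} (exhibited in Proposition \ref{prop:nondense-circ-upst}) warns that this induction must be carried out delicately.
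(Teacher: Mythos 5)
This statement is presented in the paper as an open \emph{conjecture}; the paper offers no proof of it, so there is nothing to compare your argument against except the conjecture's plausibility. More importantly, your proposal is not a proof: it is a research plan whose decisive steps are left unexecuted, and by your own admission (``the main obstacle will be the dense subcase for general $n \geq 4$'') the core of the argument is missing. The preparatory observations are sound --- writing $a_{j} = \beta\tilde{a}_{j}$ with $\tilde{a}_{j}$ in the normalized form of Theorem \ref{thm:circ-coeff-upst}, checking that $\sigma_{\ell}(\tilde{a}_{j}) = \tilde{a}_{\ell j}$ so that the vanishing locus is a union of multiplicative orbits (equivalently, of sets $\{j : \gcd(j,n)=d\}$), and the trace identity $\sum_{j}|a_{j}|^{2} = n^{-1}\sum_{k}\lambda_{k}^{2}$. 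Indeed, combining Parseval with the fact that the integers $qk+c_{k}n$ run over all residues modulo $n$ yields a genuine quantitative constraint of the shape $\beta^{2} \le 12/(n+1)$, so ``$\beta$ is confined'' has some substance. But none of this derives a contradiction.

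The central gap is the inference ``the system of magnitude equations $|\tilde{a}_{j}|^{2} = 1/\beta^{2}$ becomes overdetermined once Parseval is imposed, hence has no integer solution.'' Overdetermined polynomial systems over $\ZZ$ routinely have solutions --- the paper's own Proposition \ref{prop:nondense-circ-upst} and the non-dense order-$6$ example show how much slack the free integers $c_{k}$ provide --- so counting equations against unknowns proves nothing; you would need an actual rigidity argument in $\QQ(\zeta_{n}+\zeta_{n}^{-1})$, which you only name as ``the intended tool.'' The non-prime-power case is handled by a single sentence asserting that Facts \ref{fact:circulant-connected} and \ref{fact:washington} ``must be used jointly,'' with no argument. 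Finally, as stated the conjecture is in tension with the paper's own remark that $K_{2} = \Circ(0,1)$ also has property $\TT$ and universal perfect state transfer, so any complete treatment must either dispose of or explicitly exclude the case $n=2$; your $n \le 3$ enumeration does not address this. As it stands the proposal identifies reasonable tools but does not prove the conjecture.
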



\section*{Acknowledgments}

Research supported by NSF grant DMS-1262737. 
Part of this work was started while C.T. was visiting Institut Henri Poincar\'{e} (Centre \'{E}mile Borel)
and University of Waterloo. This author would like to thank IHP and Chris Godsil for hospitality 
and support.


\end{document}
